\title{Tractable Answer-Set Programming with Weight Constraints: Bounded Treewidth is not Enough
      \footnote{To appear in Theory and Practice of Logic Programming (TPLP). A preliminary version appeared in the Proceedings of the Twelfth International Conference on Principles of Knowledge Representation and Reasoning (KR 2010).}}
\author{Reinhard Pichler\thanks{Supported by the Austrian Science Fund (FWF): P20704-N18.}, Stefan R\"{u}mmele\footnotemark[2], Stefan Szeider\thanks{Supported by the European Research Council (ERC), project 239962.}, Stefan Woltran\thanks{Supported by Vienna University of Technology
special fund ``Innovative Projekte 9006.09/008''.}}
\date{%
Vienna University of Technology, Austria\\
\texttt{{\rmfamily \{}pichler, ruemmele, woltran{\rmfamily \}}@dbai.tuwien.ac.at, stefan@szeider.net}
}
\newtheorem {theorem}               {Theorem}
\newtheorem {definition}  [theorem] {Definition}
\newtheorem {example}     [theorem] {Example}
\newtheorem {lemma}       [theorem] {Lemma}
\newtheorem {proposition} [theorem] {Proposition}
\newcommand{\claim}[2] {\emph{Claim #1: #2}}
\newcommand {\cC} {{\mathcal C}}
\newcommand {\cG} {{\mathcal G}}
\newcommand {\cO} {{\mathcal O}}
\newcommand {\cT} {{\mathcal T}}
\newcommand {\exend}{\ifmmode\hbox{$\dashv$}\else{\unskip\nobreak\hfil
                       \penalty50\hskip1em\null\nobreak\hfil\hbox{$\dashv$}
                       \parfillskip=0pt\finalhyphendemerits=0\endgraf}
                     \fi}
\renewcommand {\mid} {\,:\,}
\newcommand {\coloneqq} {\mathrel{\mathop:}=}
\newcommand {\LR} {\Leftrightarrow}
\newcommand {\Ra} {\Rightarrow}
\newcommand {\ra} {\rightarrow}
\newcommand {\N} {\mathbb{N}}
\newcommand {\card}[1]  {\left| #1 \right|}
\newcommand {\size}[1]  {\lVert #1 \rVert}
\newcommand {\order}[1] {[\![ #1 ]\!]}
\renewcommand {\theta} {\vartheta}
\renewcommand {\phi}   {\varphi}
\newcommand {\bigO} {\cO}
\newcommand {\univ}   {U} %
\newcommand {\atoms}  {A}
\newcommand {\car}    {\Gamma}
\newcommand {\clause} {\mathit{Cl}}
\newcommand {\const}  {\mathcal{C}} %
\newcommand {\low}    {l}
\newcommand {\rules}  {\mathcal{R}}
\newcommand {\upper}  {u}
\newcommand {\weight} {W}
\newcommand {\body}    {B(r)}
\newcommand {\head}    {H(r)}
\newcommand {\AS}[1] {\mathcal{AS}(#1)}
\newcommand {\prog}  {\Pi}
\newcommand {\graph}  {\cG}
\newcommand {\TD}     {\cT}
\newcommand {\tree}   {T}
\newcommand {\bags}   {\chi}
\newcommand {\bag}    {\chi}
\newcommand {\tw}     {\mathit{tw}}
\newcommand {\ww}     {\mathit{cw}}
\newcommand {\skr}     {\mathit{sk}}
\newcommand {\crr}     {\mathit{cr}}
\newcommand {\innode}[2]    {{#1|_{#2}}}
\newcommand {\intree}[2]    {{#1\!\!\downarrow_{#2}}}
\newcommand {\sintree}[2]   {{#1\!\downarrow_{#2}}}
\newcommand {\insubtree}[2] {{#1\!\!\Downarrow_{#2}}}
\newcommand {\rootnode} {n_{root}}
\newcommand {\exprog}  {\Pi_{Ex}}
\newcommand {\assign}    {\vartheta}         %
\newcommand {\pmodel}    {M}                 %
\newcommand {\pconst}    {C}                 %
\newcommand {\prules}    {R}                 %
\newcommand {\porder}    {L_<}               %
\newcommand {\lweight}   {\gamma_<}          %
\newcommand {\pweight}    {\gamma}           %
\newcommand {\derivation} {\Delta}           %
\newcommand {\drules}     {\delta_\prules}   %
\newcommand {\datoms}     {\delta_\pmodel}   %
\newcommand {\dheads}     {\delta_h}         %
\newcommand {\dbodies}    {\delta_b}         %
\newcommand {\realhead}   {\sigma}           %
\newcommand {\eassign}   {\hat{\assign}}
\newcommand {\emodel}    {\hat{\pmodel}}
\newcommand {\econst}    {\hat{\pconst}}
\newcommand {\erules}    {\hat{\prules}}
\newcommand {\eorder}    {\hat{L}_<}
\newcommand {\elweight}  {\hat{\gamma}_<}
\newcommand {\eweight}   {\hat{\pweight}}
\newcommand {\ederivation} {\hat{\derivation}}
\newcommand {\edrules}    {\hat{\delta}_\prules}
\newcommand {\edatoms}     {\hat{\delta}_\pmodel}
\newcommand {\edheads}     {\hat{\delta}_h}
\newcommand {\edbodies}    {\hat{\delta}_b}
\newcommand {\erealhead}  {\hat{\realhead}}
\newcommand{\ccfont}[1]{\textsf{#1}}
\newcommand{\hy}{\hbox{-}\nobreak\hskip0pt}
\newcommand{\SB}{\{\,}
\newcommand{\SM}{\;{:}\;}
\newcommand{\SE}{\,\}}
\newcommand{\W}[1]{\ifmmode{\textnormal{\ccfont{W}\,[#1]}}\else{\textnormal{\ccfont{W}[#1]}}\fi}
\newcommand{\FPT}{\text{\normalfont FPT}}
\newcommand {\Mod} {\operatorname{Mod}}
\begin{document}
\maketitle

\begin{abstract}
Cardinality constraints or, more generally, weight constraints are well recognized as an important extension of answer-set programming. Clearly, all common algorithmic tasks related to programs with cardinality or weight constraints 
-- like checking the consistency of a program -- are intractable.
Many intractable problems in the area of knowledge representation and reasoning have been shown to become linear time tractable if the treewidth of the programs or formulas under consideration is bounded by some constant.
The goal of this paper is to apply the notion of treewidth to 
programs with cardinality or weight constraints
and to identify tractable fragments. It will turn out that the straightforward application of treewidth to such class of programs does not suffice to obtain tractability. However, by imposing further restrictions, tractability can be achieved.
\end{abstract}

\section{Introduction}

Answer-set programming (ASP) has evolved as a paradigm that allows for very elegant solutions to many combinatorial problems \cite{MarekT99}. The basic idea is to describe a problem by a logic program in such a way that the stable models correspond to the
solutions of the considered problem.
By extending logic programs with cardinality or, more generally,  weight constraints, an even larger class of problems is accessible to this method \cite{NiemelaSS99}. For instance, in the product configuration domain, we need to express cardinality, cost, and resource constraints, which are very difficult to capture using logic programs without weights.

In this paper, we restrict ourselves to \emph{normal logic programs with cardinality constraints} (PCCs, for short) or \emph{weight constraints}
(PWCs, for short). Clearly, all common algorithmic tasks related to PCCs
and PWCs -- like checking the consistency of a program -- are
intractable, since intractability even holds without such constraints.
An interesting approach to dealing with intractable problems comes from
parameterized complexity theory and is based on the following
observation: Many hard problems become tractable if some parameter that
represents a structural aspect of the problem instance is small. One
important parameter is \emph{treewidth}, which measures the ``tree-likeness''
of a graph or, more generally, of a structure. In the area of knowledge
representation and reasoning (KR\,\&\,R), many tractability results for
instances of bounded treewidth have been recently proven
\cite{GottlobPW10}.  
The goal of this work is to obtain tractability
results via bounded treewidth also for PCCs and PWCs.
Hereby, the treewidth of a PCC or PWC is defined in terms of 
its incidence graph (see Section~\ref{sect:background}).
It will turn out that the
  straightforward application of treewidth to PWCs does not suffice to
  obtain tractability. However, by imposing further restrictions,
  tractability can be achieved.

\paragraph{Main results of the paper.}
\mbox{}

\smallskip

$\bullet$ \ We show that the consistency problem of PWCs remains NP-complete even if the treewidth of the considered programs is bounded by a constant (actually, even if this constant is 1). Hence, we have to search for further restrictions on the PWCs to ensure tractability.

\smallskip

$\bullet$ \ We thus consider the largest integer occurring in (lower or
upper) bounds of the constraints in the PWC, and call this parameter
constraint-width. If also the constraint-width is bounded by an arbitrary
but fixed constant, then the consistency problem of PWCs becomes
\emph{linear time tractable} (the bound on the running time
entails a constant factor that is exponential in constraint-width and
treewidth).

\smallskip

$\bullet$ \ For PCCs (i.e., PWCs where all weights are equal to~1) we obtain  {\em non-uniform polynomial time\/} tractability by designing a new dynamic programming algorithm. Let $w$ denote the treewidth of
a PCC $\Pi$ and let $n$ denote the size of $\Pi$. Then our algorithm
works in time $\bigO(f(w) \cdot n^{2w})$ for some function $f$ that only
depends on the treewidth, but not on the size $n$ of the program. The
term ``non-uniform'' refers to the factor $n^{2w}$ in the time bound, where the size $n$ of the program is raised to the power of an expression that depends on the treewidth $w$. We shall also discuss further extensions
of this dynamic programming algorithm for PCCs. For example, it can be used to solve in non-uniform polynomial time the consistency problem of PWCs if the weights are given in unary representation.

\smallskip

$\bullet$ \ Of course, an algorithm for the PCC consistency problem that operates in time $\bigO(f(w) \cdot n^{O(1)})$
would be preferable,
i.e., the parameter $w$ does not occur in the exponent of the program size $n$. A
problem with such an algorithm is
called \emph{fixed-parameter tractable}. Alas, we show that
under common complexity theoretical assumptions no
such algorithm exists. Technically, we prove that the consistency problem of PCCs
parameterized by treewidth is hard
for the parameterized complexity class $\W{1}$. %
In other words, a non-uniform polynomial-time running time of our dynamic programming algorithm is the best that one can expect.

\paragraph{Structure of the paper.}
After recalling the necessary background in Section~\ref{sect:background}, we prove in Section~\ref{sect:np} the NP-completeness of the consistency problem of PWCs in case of binary representation of the weights.
In Section~\ref{sect:linear}, we show the linear fixed-parameter tractability of the problem if we consider the treewidth plus the size of the bounds as parameter.
In Section~\ref{sect:dp}, the non-uniform polynomial-time upper bound for the consistency problem of PCCs is established by presenting a dynamic programming algorithm.
Section~\ref{sect:extensions} contains the extensions of the dynamic programming algorithm.
By giving a $\W{1}$-hardness proof in case of unary representation in Section~\ref{sect:w1}, we show that it is unlikely that this result can be significantly improved.
Section~\ref{sect:discussion} contains a discussion and a conclusion is given in Section~\ref{sect:conclusion}.

\section{Background}
\label{sect:background}

\paragraph{Weight constraint programs.}

A \emph{program with weight constraints} (PWC)
is a triple $\prog = (\atoms,\const,\rules)$, where $\atoms$ is a set of
\emph{atoms}, $\const$ is a set of \emph{weight constraints}
(or \emph{constraints} for short), and $\rules$ is a set of \emph{rules}.
Each constraint $c\in \const$ is a triple
$(S,l,u)$ where $S$ is a set of \emph{weight literals} over $\atoms$ representing a clause  and $l\leq u$
are nonnegative integers, the lower and upper bound. A weight literal over $\atoms$ is a pair
$(a,j)$ or $(\neg a,j)$ for $a\in A$ and $1 \leq j \leq u+1$, the weight of the
literal. Unless stated otherwise, we assume that the bounds and weights are given in
binary representation.
For a constraint $c=(S,l,u) \in \const$, we write
$\clause(c) \coloneqq S$, $\low(c) \coloneqq l$, and $\upper(c) \coloneqq u$.
Moreover, we use $a \in \clause(c)$ and $\neg a \in \clause(c)$ as an abbreviation for
$(a,j) \in \clause(c)$ respectively $(\neg a,j) \in \clause(c)$ for an arbitrary $j$.
A rule $r\in \rules$ is a pair $(h,b)$ where $h\in \const$ is the head and $b\subseteq \const$ is the body.
We write $\head \coloneqq h$ and $\body \coloneqq b$.
We denote by $\size{\prog}$ the size of a reasonable encoding of program $\prog$ and call it the size of $\prog$.
Unless otherwise stated, weights are assumed to be encoded in binary notation.
For instance taking
$\size{\prog} = \card{\atoms} + \sum_{(S,l,u) \in \const} (1 + \log l + \log u + \sum_{(\mathit{lit},j) \in S} (1 + \log j)) + \sum_{(h,b) \in \rules} (1+\card{b})$
would do.
Given a constraint $c \in \const$ and an interpretation $I \subseteq \atoms$
over atoms $\atoms$,
we denote the weight of $c$ in $I$ by
\[
\weight(c,I) = \sum_{\substack{(a,j) \in \clause(c) \\ a \in I}} j~+ \sum_{\substack{(\neg a,j) \in \clause(c) \\ a \not\in I}} j\quad.
\]
$I$ is a model of $c$, denoted by $I \models c$, if
$\low(c) \leq \weight(c,I) \leq \upper(c)$.
For a set $C\subseteq \const$, $I \models C$ if $I \models c$ for all $c \in C$.
Moreover, $C$ is a model of a rule $r \in \rules$, denoted by $C \models r$, if
$\head \in C$ or $\body \not\subseteq C$.
$I$ is a model of program $\prog$ (denoted by $I \models \prog$) if
$\{ c \in \const \mid I \models c \} \models r$ for all $r \in \rules$.
If the lower bound of a constraint $c \in \const$ is missing,
we assume $\low(c) = 0$. If the upper bound
is missing, $I \models c$ if $\low(c) \leq \weight(c,I)$.
A \emph{program with cardinality constraints} (PCC) can be seen as a special case of a PWC,
where each literal has weight $1$.

\paragraph{Stable model semantics.}
Given a PWC $\prog = (\atoms,\const,\rules)$ and an
interpretation $I \subseteq \atoms$.
Following~\cite{NiemelaSS99}, the reduct $c^I$ of a constraint $c \in \const$
w.r.t.\ $I$ is obtained by removing all negative literals and the upper bound from $c$, and replacing the lower bound by
\[
l' = \max(0,\; \low(c) - \sum_{\substack{(\neg a,j) \in \clause(c) \\ a \not\in I}} j).
\]
The reduct $\prog^I$ of program $\prog$ w.r.t.\ $I$ can be obtained by
first removing each rule $r \in \rules$ which contains a
constraint $c \in B(r)$ with $W(c,I) > u(c)$.
Afterwards, each remaining rule $r$ is replaced by the set of rules\footnote{With some abuse of notation,
we sometimes write for an atom~$h$,
$(h,b)$ as a shorthand for the rule $((\{(h,1)\},1,1),b)$.} $(h,b)$, where
$h \in I \cap \clause(\head)$ and $b = \{c^I \mid c \in \body \}$,
i.e., the head of the new rules is an atom instead of a constraint.
Interpretation $I$ is called a \emph{stable model} (or \emph{answer set}) of $\prog$ if
$I$ is a model of $\prog$ and
there exists no $J \subset I$ such that $J$ is a model of $\prog^I$.
The set of all answer sets of $\prog$ is denoted by $\AS{\prog}$.
The \emph{consistency problem} for PWCs is to decide whether $\AS{\prog} \neq \emptyset$.

\paragraph{Tree decompositions and treewidth.}

A \emph{tree decomposition} of a graph $\graph=(V,E)$
is a pair $\TD=(\tree,\chi)$, where $\tree$ is a tree and
$\chi$ maps each node $n$ of $\tree$
(we use $n\in\tree$ as a shorthand below)
to a \emph{bag} $\bag(n)\subseteq V$
such that
\begin{itemize}
\item[(1)] for each $v \in V$, there is an $n \in T$ with $v \in \bag(n)$;
\item[(2)] for each $(v,w) \in E$, there is an $n \in T$ with ${v, w} \in \bag(n)$;
\item[(3)] for each
$n_1, n_2, n_3\in\tree$ such that $n_2$ lies on the path from $n_1$ to $n_3$,
    $\bag(n_1) \cap \bag(n_3) \subseteq \bag(n_2)$ holds.
\end{itemize}
\smallskip

A tree decomposition $(\tree,\chi)$ is called
\emph{normalized}
(or \emph{nice}) \cite{Kloks94},
if $\tree$ is a rooted tree and the following conditions hold:
(1)~each $n\in\tree$ has $\leq 2$ children;
(2)~for each $n\in T$ with two children $n_1,n_2$, $\bag(n)=\bag(n_1)=\bag(n_2)$;
and
(3)~for each $n\in T$ with one child $n'$,
$\bag(n)$ and $\bag(n')$ differ in exactly one element.

The \emph{width} of a tree decomposition is defined as the cardinality of its
largest bag
$\bag(n)$
minus one.
It is known that
every tree decomposition can be normalized in linear time without increasing the width \cite{Kloks94}.
The \emph{treewidth} of a graph $\graph$, denoted as $\tw(\graph)$, is the minimum
width over all tree decompositions of~$\graph$.
For arbitrary but fixed $w \geq 1$, it is feasible in linear time
to decide whether a graph
has treewidth $\leq w$ and, if so, to
compute a tree decomposition of width $w$,
see \cite{Bod96}.

\paragraph{Treewidth and constraint-width of PWCs.}

To build tree decompositions for programs, we
use \emph{incidence graphs}.
For a PWC $\prog = (\atoms,\const,\rules)$,
such a graph has vertex set
$\atoms \cup \const \cup \rules$.
There is an edge between $a \in \atoms$ and $c \in \const$ if $a \in \clause(c)$ or
$\neg a \in \clause(c)$, and there is an edge between
$c \in \const$ and $r  \in \rules$ if $c \in \{\head\} \cup \body$.
The treewidth of $\prog$, denoted by $\tw(\prog)$, is the treewidth of its incidence graph.
The \emph{constraint-width} of $\prog$,
denoted by
$\ww(\prog)$, is the largest (lower or upper) bound occurring in the
constraints of $\const$ (or 0 if there are no bounds).

\begin{example}
\label{ex:config}
Consider the following system configuration problem,
where one has to choose among the given parts:
$p_1: 4000\$$, $p_2: 2000\$$, and $p_3: 1000\$$ such that
the total cost is $\leq 5000\$$. Thereby one of
$\{p_1,p_2\}$ has to be selected and $p_3$ requires $p_2$.

This scenario can be represented by the PWC
\[\exprog=(\{p_1,p_2,p_3\},\{c_1,c_2,c_3,c_4\},\{r_1,r_2,r_3\})\]
with
\begin{align*}
c_1& =(\{(p_1,4),(p_2,2),(p_3,1)\},0,5)& r_1& =(c_1,\emptyset)\\
c_2& =(\{(p_1,1),(p_2,1)\},1,2)& r_2& =(c_2,\emptyset)\\
c_3& =(\{(p_2,1)\},1,1)& r_3& =(c_3,\{c_4\})\\
c_4& =(\{(p_3,1)\},1,1)
\end{align*}

\smallskip

\noindent
The incidence graph $G_{Ex}$ of $\exprog$ as well as
a normalized tree decomposition $\TD_{Ex}$ for $\exprog$ of width 2
are depicted in Figure~\ref{fig:td}.
\end{example}

\begin{figure}[t]
\centering
\begin{tikzpicture}[node distance=1.5cm]
  \tikzstyle{every node}=[draw,circle,fill,inner sep=2pt]
  \node (c1) [label=left:$c_1$] {};
  \node (c2) [below of=c1,label=left:$c_2$] {};
  \node (c3) [below of=c2,label=left:$c_3$] {};
  \node (c4) [below of=c3,label=left:$c_4$] {};
  \node (p1) [label=right:$p_1$] at ++(0.75,-0.75) {};
  \node (p2) [below of=p1,label=right:$p_2$] {};
  \node (p3) [below of=p2,label=right:$p_3$] {};
  \node (r1) [label=left:$r_1$] at ++(-0.75,-0.75) {};
  \node (r2) [below of=r1,label=left:$r_2$] {};
  \node (r3) [below of=r2,label=left:$r_3$] {};
  \node [draw=none,fill=none,] at ++(-1,0.5) {$G_{Ex}$:};
  \draw [-] (p1) -- (c1);
  \draw [-] (p1) -- (c2);
  \draw [-] (p2) -- (c1);
  \draw [-] (p2) -- (c2);
  \draw [-] (p2) -- (c3);
  \draw [-] (p3) -- (c1);
  \draw [-] (p3) -- (c4);
  \draw [-] (r1) -- (c1);
  \draw [-] (r2) -- (c2);
  \draw [-] (r3) -- (c3);
  \draw [-] (r3) -- (c4);
\end{tikzpicture}
\hspace{0.5cm}
\begin{tikzpicture}
\tikzstyle{every node}=[draw,rectangle, label distance=5mm]
\tikzstyle{level 1}=[level distance=0.75cm, sibling distance=3.2cm]
\tikzstyle{level 5}=[level distance=0.75cm, sibling distance=2.6cm]
\tikzstyle{level 6}=[level distance=0.75cm, sibling distance=1.8cm]
\node (root) [label=left:$\TD_{Ex}$:] {$p_2,c_1,c_3$}
  child {node {$p_2,c_1,c_3$}
    child {node {$c_1,c_3$}
      child {node {$p_3,c_1,c_3$}
        child {node {$p_3,c_3$}
          child {node {$p_3,c_3,c_4$}
            child {node {$c_3,c_4$}
              child {node {$c_3,c_4,r_3$}}}}}}}}
  child {node {$p_2,c_1,c_3$}
    child {node {$p_2,c_1$}
      child {node {$p_2,c_1,c_2$}
        child {node {$c_1,c_2$}
          child {node {$c_1,c_2$}
            child {node {$p_1,c_1,c_2$}}}
          child {node {$c_1,c_2$}
            child {node {$c_1,c_2$}
              child {node {$c_1,c_2,r_1$}}}
            child {node {$c_1,c_2$}
              child {node {$c_1,c_2,r_2$}}}}}}}};
\end{tikzpicture}
\caption{Incidence graph $G_{Ex}$ and tree decomposition $\TD_{Ex}$ of Example~\ref{ex:config}.}
\label{fig:td}
\end{figure}
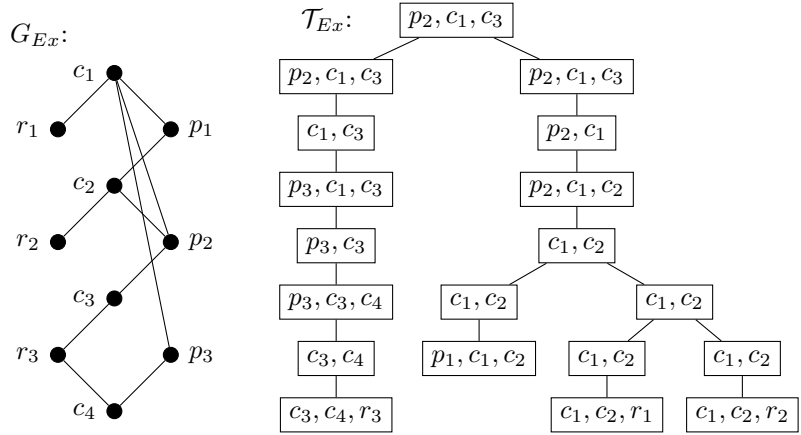

\section{NP-Completeness}
\label{sect:np}

\begin{theorem}\label{the:nphard}
  The consistency problem for PWCs is NP-complete already for programs having
  treewidth~1.
\end{theorem}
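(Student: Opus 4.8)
The plan is to prove membership in NP and NP-hardness separately, with hardness already holding for treewidth~$1$. For membership, given a PWC $\prog=(\atoms,\const,\rules)$ I would guess an interpretation $I\subseteq\atoms$ and verify $I\in\AS{\prog}$ in polynomial time. Testing $I\models\prog$ is immediate from the definitions. For the minimality test, note that every constraint of the reduct $\prog^I$ retains only positive literals and a lower bound and is therefore monotone; hence the one-step consequence operator of $\prog^I$ is monotone, its least model $M$ is computable in polynomial time by fixpoint iteration, and (using that $I\models\prog$ implies $I\models\prog^I$) the candidate $I$ is stable iff $M=I$. Thus the whole verification is polynomial and the problem lies in NP.

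For hardness I would reduce from \emph{Subset Sum}: given positive integers $a_1,\dots,a_n$ and a target $t$, all encoded in binary and w.l.o.g.\ satisfying $1\le a_i\le t$, decide whether some subset of the $a_i$ sums exactly to $t$. Take $\atoms=\{x_1,\dots,x_n\}$, a single constraint $c=(\{(x_1,a_1),\dots,(x_n,a_n)\},t,t)$, and the single rule $r_c=(c,\emptyset)$, and set $\prog=(\atoms,\{c\},\{r_c\})$. Since $\weight(c,I)=\sum_{x_i\in I}a_i$ and $r_c$ has empty body, the rule forces $c$ to hold, so $I\models\prog$ holds iff $\weight(c,I)=t$, i.e.\ iff the atoms true in $I$ select a subset summing to $t$.

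The crucial observation is that the reduct trivialises the minimality check: replacing $r_c$ as prescribed yields exactly the facts $(x_i,\emptyset)$ for every $x_i\in I\cap\clause(c)=I$, so the least (and only minimal) model of $\prog^I$ is $I$ itself; consequently $I$ is stable whenever $I\models\prog$. Therefore $\AS{\prog}\neq\emptyset$ iff the Subset Sum instance is a yes-instance. As the weights are written in binary, $\size{\prog}$ is polynomial (in fact linear) in the size of the instance, so the reduction runs in polynomial time. Finally, the incidence graph of $\prog$ has vertices $\{x_1,\dots,x_n\}\cup\{c,r_c\}$ with an edge from each $x_i$ to $c$ (as $x_i\in\clause(c)$) and one edge from $c$ to $r_c$ (as $c$ is the head of $r_c$); this star is a tree, hence $\tw(\prog)=1$.

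The main obstacle is not the combinatorics but getting the stable-model bookkeeping exactly right: one has to check that the single head-constraint rule $r_c$ at once enforces the numeric condition in the model test and collapses the minimality test through its reduct, and that the hardness genuinely exploits the binary encoding---for, with weights in unary, Subset Sum is solvable in polynomial time, in line with the tractability results established later in the paper.
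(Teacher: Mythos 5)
Your proof is correct and follows essentially the same route as the paper: a reduction from a weakly NP-hard number problem encoded as a single weight constraint $(\{(x_1,a_1),\dots,(x_n,a_n)\},t,t)$ with one empty-body rule, whose incidence graph is a star (treewidth~1) and whose reduct collapses to facts so that every model is stable. The only differences are cosmetic --- you reduce from \textsc{Subset Sum} where the paper uses \textsc{Partition}, and you spell out the NP-membership argument that the paper dismisses as ``clear'' --- both of which are fine.
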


\begin{proof}
  Clearly the problem is in NP\@.
  To show NP-hardness we reduce from the well-known NP-complete problem \textsc{Partition}.
  An instance of \textsc{Partition} is a collection of positive integers $X=\{x_1,\dots,x_n\}$ (encoded in binary); the question is whether there exists a set $I\subseteq \{1,\dots,n\}$ such that $\sum_{i\in I} x_i = \sum_{i\notin I} x_i$.
  Given an instance $X=\{x_1,\dots,x_n\}$, we construct a PWC $\prog=(\atoms,\const,\rules)$ as follows.
  Let $S=\sum_{i=1}^n x_i$; we may assume that $S$ is even since otherwise $X$ is a no-instance and can immediately be rejected.
  We put $A=\{a_1,\dots,a_n\}$, $\const=\{c\}$ where $c=(\{ (a_1,x_1),\dots, (a_n,x_n) \}, S/2,S/2)$, and $\rules=\{(c,\emptyset)\}$.

  \claim{1}{$\prog$ has treewidth 1.}
  By construction the incidence graph of $\prog$ is a tree, hence of treewidth~1.

  \claim{2}{$X$ is a yes-instance of \textsc{Partition} if and only if $\prog$ has a model.}
  This claim follows easily from the definitions.

  \claim{3}{All models of $\prog$ are stable.}
  Let $M$ be a model of~$\prog$.
  Since each atom appears positively in a constraint at the head of a rule, and since all the rules have an empty body, it follows that the reduct $\prog^{M}$ is the conjunction of all the elements of $M$.
  Hence $M$ is stable since no proper subset of $M$ can satisfy $\prog^{M}$.
  We conclude that $X$ is a yes-instance of \textsc{Partition} if and only if $\prog$ is consistent.

  It is evident that $\prog$ can be constructed from $X$ in polynomial time.
  Hence, by Claims 1--3 we have a polynomial-time reduction from \textsc{Partition} to the consistency problem of PWCs of treewidth~1, and the theorem follows.
\end{proof}

Note that \textsc{Partition} is ``weakly NP-hard'' since its NP-hardness
depends on the binary encoding of the
given
integers. Accordingly, our
reduction provides only weak NP-hardness for the consistency of
PWCs of bounded treewidth.
In fact, we shall prove in Section~\ref{sect:extensions}
that if we assume the weights to be given in unary the consistency
problem is feasible in (non-uniform) polynomial time for PWCs of bounded
treewidth.

\section{Linear-Time Tractability}
\label{sect:linear}

\begin{theorem}\label{the:lintime}
  The consistency problem for PWCs can be solved in linear time for instances
  whose treewidth and constraint-width are bounded by constants.
\end{theorem}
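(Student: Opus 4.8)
The plan is to design a dynamic programming algorithm that traverses a normalized tree decomposition of the incidence graph of $\prog$ from the leaves to the root, maintaining at each node a table of partial solutions restricted to the contents of the current bag. By the result recalled in Section~\ref{sect:background}, for fixed $w$ we can compute in linear time a normalized tree decomposition $\TD = (\tree,\bag)$ of the incidence graph of width $w = \tw(\prog)$. The decisive observation is that a bounded constraint-width keeps \emph{all} numbers occurring in a constraint small: since every literal weight $j$ satisfies $1 \le j \le \upper(c)+1 \le \ww(\prog)+1$ and every bound is at most $\ww(\prog)$, the running weight of a constraint accumulated along the decomposition need never be stored exactly once it exceeds its upper bound. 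I therefore track, for each constraint $c$ present in a bag, a \emph{capped} partial weight in $\{0,\dots,\upper(c)+1\}$, collapsing every value above $\upper(c)$ to the sentinel $\upper(c)+1$. This bounds the number of distinct weight vectors per bag by $(\ww(\prog)+2)^{w+1}$, hence by a function of $w$ and $\ww(\prog)$ alone.

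Each table entry must first certify that the guessed interpretation is a \emph{model} of $\prog$. To this end it records the set $M$ of bag-atoms assumed true, and, for each bag-constraint, its capped partial weight, where the contribution of an atom to a constraint is added exactly once at a node where both occur together (such a node exists by property~(3) of tree decompositions). When a constraint is about to be forgotten its partial weight is final, so we freeze its satisfaction bit by testing $\low(c)\le\weight(c,I)\le\upper(c)$; when a rule is forgotten we verify that $\head \in \{c : I \models c\}$ or $\body \not\subseteq \{c : I \models c\}$, using the already frozen bits of its head- and body-constraints. Any entry whose local test fails is discarded.

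The harder half is the \emph{stability} check, which asks that no $J \subsetneq I$ be a model of the reduct $\prog^I$. Following the standard certificate technique for stable-model reasoning on tree decompositions, I would augment every entry by a \emph{set} of counter-witnesses: partial interpretations $J \subseteq M$, each carrying its own capped weight accounting computed with respect to the reduct constraints $c^I$ (positive literals only, upper bound dropped, lower bound lowered to $\max(0,\,\low(c)-\sum_{(\naf a,j)\in\clause(c),\, a\notin I} j)$) and honouring the reduct's deletion of rules whose body contains a constraint with $\weight(c,I)>\upper(c)$; each $J$ also carries a bit recording whether it has already been forced to be a \emph{proper} subset of $I$. Since a certificate is determined by its restriction to the at most $w+1$ bag-atoms together with a bounded weight vector and this bit, the number of possible certificates per bag, and hence the number of certificate \emph{sets}, is again bounded in terms of $w$ and $\ww(\prog)$; consequently each whole table has size $\bigO(f(w,\ww(\prog)))$. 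The transitions guess the truth value of each introduced atom and extend every certificate accordingly, accumulate the capped weights of both accountings, finalize constraint satisfaction (for $I$ against the original bounds and for each $J$ against the reduced ones) when a constraint is forgotten, and, at a join node, combine two child entries that agree on the bag by adding their accumulated weights and merging their certificate sets. At the root we report $\AS{\prog}\neq\emptyset$ precisely when some entry witnesses a model $I$ of $\prog$ whose certificate set is empty, i.e.\ no proper subset of $I$ survived as a model of $\prog^I$.

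The main obstacle is to keep the two weight accountings consistent and free of double-counting throughout the traversal: the reduced lower bound $l'$ of a constraint $c$ depends on the final truth values of the \emph{negative} occurrences in $c$ under $I$, which may be settled at a node different from the one where a certificate's positive contributions to $c$ are booked. I expect this to require threading the accumulated negative-weight contribution of each constraint through the tables and fixing $l'$ only at the forget node of $c$, in exactly the same place where its capped positive weights become final. Once soundness and completeness of every entry relative to the subprogram induced by its subtree are established by a routine induction over $\TD$, the complexity bound is immediate: the decomposition has $\bigO(n)$ nodes, each table has constant size for fixed $w$ and $\ww(\prog)$, and each transition is computable in constant time, which yields the claimed linear running time $\bigO(f(w,\ww(\prog))\cdot n)$.
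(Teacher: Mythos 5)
Your proposal takes a genuinely different route from the paper's. The paper does not build a dynamic programming algorithm for this theorem at all: it encodes $\prog$ into a \emph{labeled} incidence graph, using the bound $k=\ww(\prog)$ to turn all literal weights and all lower/upper bounds into finitely many edge and vertex labels, writes an MSO formula $\exists Y\,\mathrm{SMod}(Y)$ expressing the existence of a stable model (including the subset quantification $\forall X\subseteq Y$ needed for minimality over the reduct $\prog^Y$), and then invokes Courcelle's Theorem together with Bodlaender's linear-time tree decomposition algorithm. That buys a short correctness argument whose only real work is checking that the formulas $\mathrm{Sum}_s$, $\mathrm{Sat}$, $\Mod$ and $\mathrm{Red}$ say the right thing, at the price of yielding no explicit algorithm and astronomically large hidden constants. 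Your explicit table-based DP with capped weight counters and certificate sets is the standard alternative --- essentially the Jakl et al.\ style algorithm that the paper discusses in Sections~4 and~8, enriched with weight accounting; it yields a concrete procedure, at the price of a long case analysis and tables that are double exponential in the treewidth because you store \emph{sets} of certificates. Both give $\bigO(f(w,k)\cdot n)$, so both prove the theorem.

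Two places in your plan need more care than you give them. First, the quantity $\sum_{(\naf a,j)\in\clause(c),\,a\notin I}j$ that determines the reduced lower bound $l'$ is \emph{not} itself bounded by $k$, since a constraint may contain arbitrarily many negative literals; you must observe that only $\max(0,\low(c)-\cdot)$ matters and hence this accumulator can also be capped at $\low(c)\le k$, otherwise the value you propose to thread through the tables destroys the bound on the table size. Second, the reduct does not keep constraint heads: each surviving rule $r$ is replaced by one rule per atom $a\in I\cap\clause(\head)$ with that \emph{atom} as head, so a certificate $J$ must satisfy ``$a\in J$ or some reduct body constraint of $r$ fails under $J$'' for \emph{every} such $a$; since $a$ and $r$ need never share a bag (they are at distance two in the incidence graph), this forces an additional flag per head constraint recording whether some $a\in I\cap\clause(c)\setminus J$ has already been encountered. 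Neither point is fatal, but both must appear in the invariant of your induction over the decomposition.
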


To prove this result we shall take a logic approach and use Courcelle's
Theorem \cite{Courcelle87}, see also
\cite{DowneyFellows99,FlumGrohe06}. To this aim we consider
\emph{Monadic Second Order} (MSO) logic on labeled graphs in terms of
their incidence structure whose universe contains vertices and edges. We
assume an infinite supply of individual variables $x,x_1,x_2,\dots$ and
set variables $X,X_1,X_2,\dots$\ The atomic formulas are $E(x)$ (``$x$
is an edge''), $V(x)$ (``$x$ is a vertex''), $I(x,y)$ (``vertex $x$ is
incident with edge~$y$''), $x=y$ (equality), and $X(y)$ (``element $y$
belongs to set~$X$'').  Further we assume that a vertex or edge $x$ can
be labeled with an element $a$ of some fixed finite set, denoted by the
atomic formula $P_a(x)$.  MSO formulas are built up from atomic
formulas using the usual Boolean connectives
$(\neg,\wedge,\vee)$, quantification over
individual variables ($\forall x$, $\exists x$), and quantification over
set variables ($\forall X$, $\exists X$).

We write $G \models \phi$ to indicate that an MSO formula $\phi$ is true
for the labeled graph~$G$.
Courcelle's Theorem states that $G \models \phi$ can be checked in linear time for labeled graphs if a tree decomposition of constant width is provided as an input.
The latter is no restriction for proving Theorem~\ref{the:lintime}, since by Bodlaender's Theorem~\cite{Bod96}, we can compute in linear time a tree decomposition of smallest width for graphs whose treewidth is bounded by a constant.

\newcommand{\lbound}{\text{low}}
\newcommand{\ubound}{\text{up}}

Let $k$ be a constant and consider a PWC $\prog = (\atoms,\const,\rules)$ of
constraint-width $k$.  We encode all the information of $\prog$ by adding edge and
vertex labels to the incidence graph of $\prog$.  We use the edge labels $+,-$
to indicate polarity of literals and the labels $h,b$ to distinguish between
head and body of rules.  That is, an edge $\{a,c\}$ for $a\in A$ and $c\in \const$
has label $+$ if $a\in \clause(c)$, and label $-$ if $\neg a\in \clause(c)$;
an edge $\{c,r\}$ for $c\in \const$ and $r\in \rules$ has label $h$ if $c=\head$
and label $b$ if $c\in \body$.  We use edge labels $1,\dots,k+1$ to encode
weights of literals (literals of weight 0 can be omitted, weights exceeding
$k+1$ can be replaced by $k+1$). That is, an edge $\{a,c\}$ for $a\in A$ and $c\in
\const$ has label $j$ if the constraint $c$ contains the weight literal
$(a,j)$ or $(\neg a,j)$.  We use vertex labels $\lbound[i]$ for $i\in
\{0\dots,k\}$ and $\ubound[j]$ for $j\in \{0\dots,k,\infty\}$ to encode the
bounds of constraints (we use $\lbound[0]$ and $\ubound[\infty]$ in case the
lower or upper bound is missing, respectively).
Finally we use vertex labels
$\atoms,\const,\rules$ to indicate whether a vertex represents an atom, a
clause or a rule, respectively.

Let $G$ denote the incidence graph of the PWC $\prog$ with added labels as
described above.  In the following we will explain how to construct an MSO
formula $\phi$ such that $G \models \phi$ if and only if $\prog$ has a stable model. For
convenience we will slightly abuse notation and use meta-language terms as
shorthands for their obvious definitions in the MSO language; for example we
will write $X\subseteq Y$ instead of $\forall x (X(x)\rightarrow Y(x))$, and
$a\in \atoms$ instead of $V(a) \wedge P_\atoms(a)$.

Let $X$ and $Y$ be set variables and $c$ an individual variable.
For each integer $s \in \{0,\dots, k+1\}$ we define an MSO formula
$\text{Sum}_s(X,Y,c)$ that is true for $G$ if and only if $X$ and $Y$ are
interpreted as sets of atoms, $c$ is interpreted as a constraint, and we
have
\[s=
\sum_{\substack{(a,j)\in \clause(c)\\ a\in X}} j+ \sum_{\substack{(\neg
  a,j)\in \clause(c)\\ a\notin Y}} j.\]
We use the fact that it is always sufficient to choose at most $k+1$ literals
from $c$ (say $r$ positive and $r'$ negative literals) to witness that the
above equality holds.

\medskip\noindent $\text{Sum}_s(X,Y,c)\equiv$

$ X,Y\subseteq \atoms \wedge c\in
\const$ \hfill {\footnotesize (1)}

$\wedge \bigvee_{1\leq r +r' \leq  k,\
         1\leq n_1,\dots,n_{r+r'} \leq k+1,\
         s= n_1+\dots + n_{r+r'} }
\exists e_1,\dots, e_{r+r'}$ \hfill {\footnotesize (2)}

~~$\big[\bigwedge_{i=1}^{r+r'} (P_{n_i}(e_i) \wedge I(c,e_i) \wedge \exists a\in
A, I(a,e_i))$ \hfill {\footnotesize (3)}

~~~$\wedge \bigwedge_{1\leq i< i' \leq r+r'}
e_i\neq e_{i'}$ \hfill {\footnotesize (4)}

~~~$\wedge \mathop{\forall e\in E}\; ( \neg I(c,e) \vee  \forall a\in
\atoms, \neg I(a,e) \vee \bigvee_{i=1}^{r+r'} e=e_i)$
\hfill {\footnotesize (5)}

~~~$\wedge
\bigwedge_{i=1}^r ( P_{+}(e_i) \wedge  \exists a\in X, I(a,e_i))$
\hfill {\footnotesize (6)}

~~~$\wedge
\bigwedge_{i=r+1}^{r'} ( P_{-}(e_i) \wedge \neg \exists a\in Y,
I(a,e_i)) \; \big]$
\hfill {\footnotesize (7)}

\smallskip\noindent Some further explanation: Each of the $r+r'$ literals is
represented by an edge $e_i$ of weight $n_i$.  The disjunction in line~(2)
runs over all possible combinations of weights $n_1,\dots,n_{r+r'}$ that give
the sum~$s$.  Line~(3) makes sure that each edge $e_i$ has weight $n_i$ and
runs between constraint $c$ and some atom.  Lines~(4) and (5) make sure that
the edges are mutually different and that no other edge runs between
constraint $c$ and an atom. Lines~(6) and (7) make sure that $e_1,\dots,e_r$
represent positive literals over atoms that belong to~$X$, and
$e_{r+1},\dots,e_{r+r'}$ represent negative literals over atoms that do not
belong to~$Y$.

\smallskip \noindent The following formula is true if and only if $X$ satisfies
$c$.
\begin{align*}
\text{Sat}(X,c) &\equiv \text{SatL}(X,X,c)\wedge \text{SatU}(X,X,c), \text{ where} \\
\text{SatL}(X,Y,c) &\equiv P_{\lbound[0]} \vee \bigvee_{i\in \{1,\dots,k\}}  (P_{\lbound[i]}(c)
\wedge \bigvee_{i \leq s \leq k+1} \text{Sum}_s(X,Y,c)), \text{ and} \\
\text{SatU}(X,Y,c) &\equiv P_{\ubound[\infty]} \vee \bigvee_{j \in \{0,\dots,k\}}   (P_{\ubound[j]}(c)
\wedge \bigvee_{0 \leq s \leq j} \text{Sum}_s(X,Y,c)).
\end{align*}

\smallskip \noindent The next formula is true if and only if $Y$ is a model
of~$\prog$.
\begin{align*}
  \Mod(Y) &\equiv \forall r\in \rules\; \exists c\in \const\; \big[ (H(c,r) \wedge
\text{Sat}(Y,c)) \vee (B(c,r) \wedge \neg \text{Sat}(Y,c))\big], \text{ where} \\
H(c,r) &\equiv \exists e\in E\; (I(c,e) \wedge I(r,e) \wedge P_h(e)), \text{ and} \\
B(c,r) &\equiv \exists e\in E\; (I(c,e) \wedge I(r,e) \wedge P_b(e)).
\end{align*}

\smallskip \noindent Finally, the formula $\text{SMod}(Y)$ is true if and only if
$Y$ is a stable model of $\prog$. We make use of the formula
$\text{Red}(X,Y)$ that states that $X$ satisfies the
reduct~$\prog^Y$.
\begin{align*}
\text{SMod}(Y) &\equiv \Mod(Y)
\wedge \forall X\subseteq Y\; (X=Y \vee \neg \text{Red}(X,Y)), \text{ where} \\
\text{Red}(X,Y) &\equiv \forall r\in\rules\;
\forall a\in \atoms\; [
a\in X \vee a\notin Y \vee \neg \text{InH}(a,r) \\
&\quad \vee \exists c\; (B(c,r) \wedge (
    \neg \text{SatU}(Y,Y,c) \vee
    \neg \text{SatL}(X,Y,c) ))], \text{ and} \\
\text{InH}(a,r) &\equiv \exists c\in \const\; \exists e, e'\in E\;
    [I(a,e) \wedge I(c,e) \wedge P_+(e)\\
    &\quad \wedge I(r,e') \wedge I(c,e') \wedge P_h(e')],
\end{align*}
that is, $a$ is an atom that occurs as a positive literal in the
constraint at the head of rule~$r$.

We summarize the correctness of the construction in the following lemma.

\begin{lemma}
  Let $\phi=\exists Y\; \text{\normalfont SMod}(Y)$. Then $\prog$ has a
  stable model if and only if $G \models \phi$.
\end{lemma}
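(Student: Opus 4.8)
The plan is to verify the construction modularly, proving by a bottom-up induction over the subformulas that each one is \emph{semantically faithful}: under every interpretation of its free individual and set variables by vertices and vertex-sets of $G$, the formula holds in $G$ exactly when the corresponding condition on $\prog$ holds. Once every building block is established, the statement $\phi = \exists Y\,\text{SMod}(Y)$ unwinds to ``there is a $Y \subseteq \atoms$ that is a stable model of $\prog$'', which is precisely $\AS{\prog} \neq \emptyset$. Concretely, I would prove the following chain, each claim assuming the ones below it: (i) $\text{Sum}_s(X,Y,c)$ computes the intended mixed weight; (ii) $\text{SatL}$, $\text{SatU}$, and hence $\text{Sat}(X,c)$, capture the lower-bound, upper-bound, and full satisfaction of a constraint; (iii) $\Mod(Y)$ captures $Y \models \prog$; (iv) $\text{Red}(X,Y)$ captures $X \models \prog^Y$; and (v) $\text{SMod}(Y)$ captures stability.

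The technical heart is claim~(i), which I would argue in both directions. For soundness, any edges $e_1,\dots,e_{r+r'}$ witnessing $\text{Sum}_s(X,Y,c)$ are, by lines~(3)--(7), pairwise distinct literal-edges of $c$ whose weight labels $n_i$ sum to $s$, split into $r$ positive literals with atom in $X$ and $r'$ negative literals with atom outside $Y$; line~(5) certifies that no contributing literal-edge of $c$ is omitted, so $s$ is exactly the mixed sum. For completeness I must show that whenever the mixed sum equals some $s \in \{0,\dots,k+1\}$, a witnessing tuple of the \emph{bounded} length demanded by line~(2) exists. This is where the constraint-width $k$ enters: since weight-$0$ literals are omitted and weights exceeding $k+1$ are capped at $k+1$, every contributing literal carries weight at least~$1$, so a mixed sum of value $s \leq k+1$ is realised by at most $k+1$ literals, and the existential prefix $\exists e_1,\dots,e_{r+r'}$ together with the finite disjunction over weight compositions $s = n_1 + \dots + n_{r+r'}$ ranges over a sufficient, yet crucially bounded, set of cases. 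I expect the careful bookkeeping of this counting argument --- matching the ranges of $r,r'$ and $s$ in line~(2) against the at-most-$k+1$-literals bound, and confirming that capping large weights never affects a bound comparison used downstream --- to be the main obstacle.

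The remaining claims are comparatively direct unfoldings of the definitions. For~(ii), $\text{SatL}(X,Y,c)$ enumerates the possible lower bounds $\lbound[i]$ and asserts some admissible sum $s \geq i$, matching $\low(c) \leq \weight(c,X)$ when $Y = X$, while $\text{SatU}$ does the dual for $\upper(c)$; together they give $\text{Sat}(X,c) \equiv X \models c$. For~(iii), $\Mod(Y)$ literally transcribes $\{c \mid Y \models c\} \models r$ for every rule via the head/body edge predicates $H$ and $B$. The second delicate point is~(iv), which I would verify term by term against the reduct definition: the removal of a rule having a body constraint $c$ with $\weight(c,Y) > \upper(c)$ is captured by the disjunct $\neg\text{SatU}(Y,Y,c)$; the reduct constraint $c^Y$, which keeps only positive literals and replaces the lower bound by $\max(0,\low(c) - \sum_{(\neg a,j)\in\clause(c),\,a\notin Y} j)$, is satisfied by $X$ exactly when $\text{SatL}(X,Y,c)$ holds, the $Y$-slot of $\text{Sum}_s$ feeding the negative part into the reduced bound; and the reduct heads, being the atoms $h \in Y \cap \clause(\head)$, are singled out by the conjunction of $a \in Y$ and $\text{InH}(a,r)$.

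Finally,~(v) is immediate: $\text{SMod}(Y)$ conjoins $\Mod(Y)$ with the assertion that no proper subset $X \subset Y$ satisfies the reduct, which is exactly the stability requirement that no $J \subset Y$ models $\prog^Y$. Prefixing the existential quantifier $\exists Y$ then yields that $G \models \phi$ if and only if $\prog$ has a stable model, which is the claim of the lemma.
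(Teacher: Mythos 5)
Your proposal is correct and follows essentially the same route as the paper: the paper states the lemma without a separate formal proof, relying on the inline explanations of $\text{Sum}_s$, $\text{Sat}$, $\Mod$, $\text{Red}$ and $\text{SMod}$, and your bottom-up induction over subformulas is precisely the formalization of that modular argument. In particular you correctly isolate the two genuinely delicate points --- the bounded-witness argument for $\text{Sum}_s$ (which is exactly where the constraint-width bound $k$ is needed) and the encoding of satisfaction of the reduct $\prog^Y$ via $\text{SatL}(X,Y,c)$ with the negative contribution measured against $Y$.
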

Since the labeled graph $G$ can be constructed in linear time, Theorem~\ref{the:lintime} now follows directly by Courcelle's Theorem.

\section{Dynamic Programming Approach}
\label{sect:dp}

Recently,
\cite{JaklPW09}
presented a dynamic programming algorithm for answer-set programming
that works for programs without cardinality or weight constraints, but
possibly with disjunction in the head of the rules.
One way to obtain a dynamic programming algorithm for PCCs is to try to
extend that algorithm of Jakl et al.\
by methods to handle the cardinality
constraints. In principle, this should be feasible. However, computationally, this approach has a serious drawback, namely: the aforementioned algorithm is
tractable for bounded treewidth, but it is {\em double exponential\/}
w.r.t.\ the treewidth (basically this is due to the handling of disjunctions).
Our goal here is to present an algorithm that is only {\em single exponential\/} w.r.t.\ the treewidth. In order to achieve this goal, we have to manipulate a slightly more complicated data structure along the bottom-up traversal of the tree decomposition. In particular, we have to deal with orderings on the atoms in a model.

To this end, we need an alternative characterization of
stable models.
Slightly rephrasing a result by
\cite{Liu09}
we can characterize answer sets of PCCs as follows:

\begin{proposition}
\label{prop:as}
Given a PCC $\prog=(\atoms,\const,\rules)$,
$\pmodel \subseteq \atoms$ is an answer set (stable model) of $\prog$
if and only if
the following conditions are jointly satisfied:
\begin{itemize}
\item $\pmodel$ is a model of $\prog$, i.e., $\pmodel \models \prog$,
\item there exists a strict linear order $<$ over $\pmodel$, such that
for each atom $a \in \pmodel$, there exists a rule $r\in \rules$ with\\
(R1)\ $a \in \clause(\head)$,\\
(R2)\ $\pmodel \models \body$,\\
(R3)\ for each $c\in \body$, $\low(c)\leq
\card{\{b \in \clause(c) {\mid} b < a\}\cup\{ \neg b \in \clause(c) {\mid} b \in \atoms \setminus \pmodel\}}$.
\end{itemize}
\end{proposition}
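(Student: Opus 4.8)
The plan is to prove Proposition~\ref{prop:as} by establishing the equivalence between the standard reduct-based definition of stable models (given in the Background section) and the ordering-based characterization, treating the two directions separately. The conceptual heart of the matter is that a strict linear order $<$ on $\pmodel$ encodes a ``stage-by-stage'' derivation of the atoms in $\pmodel$: enumerating $\pmodel$ as $a_1 < a_2 < \dots < a_m$ should correspond to a sequence of subsets $\emptyset = M_0 \subset M_1 \subset \dots \subset M_m = \pmodel$ with $M_i = \{a_1,\dots,a_i\}$, where each $a_i$ is ``founded'' at stage $i$ by a rule whose body is already satisfied using only atoms proven earlier. First I would recall that, since $\prog$ is a PCC, every literal has weight $1$, so for a constraint $c$ the weight $\weight(c,M_i)$ is simply the count of satisfied literals, and the reduct lower bound $l'$ in a rule body becomes $\max(0,\low(c) - \card{\{\neg b \in \clause(c) \mid b \notin \pmodel\}})$. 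This is exactly why condition (R3) counts positive literals $b < a$ together with negative literals over atoms outside $\pmodel$: the negative literals contribute their full (unit) weight under the reduct, while a positive literal may be used only if its atom was already derived strictly earlier in the order.

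For the forward direction, I assume $\pmodel$ is a stable model in the reduct sense, so $\pmodel \models \prog$ (giving the first bullet) and no proper subset $J \subset \pmodel$ satisfies $\prog^\pmodel$. The plan is to build the order $<$ by iterating a one-step consequence operator on the reduct $\prog^\pmodel$: let $M_0 = \emptyset$ and, given $M_{i-1} \subsetneq \pmodel$, pick some atom $a \in \pmodel \setminus M_{i-1}$ that is derivable by a reduct rule whose body is satisfied by $M_{i-1}$. The key lemma to prove here is that at every stage such an atom exists, i.e.\ the least fixpoint of the reduct's consequence operator equals all of $\pmodel$; otherwise the fixpoint would be a proper subset of $\pmodel$ satisfying $\prog^\pmodel$, contradicting minimality. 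Defining $a_i < a_j$ iff $i < j$ then yields a strict linear order, and unwinding the reduct construction shows that the rule founding $a_i$ satisfies (R1)--(R3), where (R3) holds because the reduct lower bound is met by $M_{i-1} = \{b \mid b < a_i\}$ together with the negative-literal contribution.

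For the converse, I assume $\pmodel \models \prog$ and that an order $<$ with properties (R1)--(R3) exists, and I must show no $J \subset \pmodel$ satisfies $\prog^\pmodel$. The plan is to argue by contradiction: take the $<$-least atom $a$ in $\pmodel \setminus J$, and consider the rule $r$ founding $a$. Conditions (R1)--(R3) ensure that $J$ already satisfies the reduct body $\{c^\pmodel \mid c \in \body\}$, because every positive literal $b < a$ used to meet the reduct bound lies in $M_{i-1} \subseteq J$ (as $a$ is the least missing atom) and the negative literals retain their weight in the reduct; hence the reduct rule forces $a \in J$, a contradiction. The main obstacle I anticipate is bookkeeping the weight/counting argument in (R3) precisely: one must verify that the count of positive literals $b < a$ plus the fixed negative contribution meets the \emph{reduct} lower bound $l'$ rather than the original bound $\low(c)$, and confirm that restricting to $b < a$ (as opposed to all $b \in \pmodel$) is exactly what guarantees $J$ satisfies the reduct body while still being a proper subset. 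Getting this alignment right is where the proof does its real work; the rest follows by routine unwinding of definitions, so I would present it in the style of Liu's original result, adapting the argument to the PCC notation used here.
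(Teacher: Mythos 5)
Your plan is sound, but there is an important mismatch of expectations: the paper does \emph{not} prove Proposition~\ref{prop:as} at all --- it is introduced as a slight rephrasing of a cited result of Liu (2009) and used as a black box, so there is no in-paper argument to compare against. Your self-contained derivation is the standard one and, as far as I can check, correct in outline. Two points deserve to be made explicit when you write it out. First, the forward direction's key lemma (that iterating the one-step consequence operator of $\prog^\pmodel$ from $\emptyset$ exhausts $\pmodel$) rests on the fact that reduct bodies are lower-bound-only constraints over positive literals and hence \emph{monotone}: this is what makes any pre-fixpoint of the operator a model of $\prog^\pmodel$ (yielding the contradiction with minimality when some atom of $\pmodel$ is never derivable), and it is also what lifts satisfaction of $c^\pmodel$ from $M_{i-1}$ up to $\pmodel$, which you need to recover the lower-bound half of (R2); the upper-bound half of (R2) comes separately from the fact that the rule was not deleted when forming the reduct ($\weight(c,\pmodel)\leq \upper(c)$). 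Second, in the converse direction your choice of the $<$-least atom $a$ of $\pmodel\setminus J$ is exactly right: every $<$-predecessor of $a$ lies in $\pmodel\cap J$, so the count in (R3) minus the negative-literal contribution certifies $J\models c^\pmodel$ for each body constraint, forcing $a\in J$. With those two clarifications your argument closes; it is a genuine addition rather than an alternative to anything in the paper.
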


Since the handling of linear orders is crucial
for utilizing the above characterization, we will fix some notation first.
We denote by $[x_1,x_2,\ldots,x_n]$ a (strict) linear order
$x_1 < x_2 < \ldots < x_n$ on a set $X=\{x_1,\ldots,x_n\}$.
Moreover, $\order{X}$ denotes the set of all possible linear orders over $X$.
Two linear orders $[x_1,\ldots,x_n]$ and $[y_1,\ldots,y_m]$
are called \emph{inconsistent}, if there are $x_i,x_j,y_k,y_l$ such that
$x_i<x_j$, $y_k<y_l$, $x_i=y_l$ and $x_j=y_k$.
Otherwise, we call them \emph{consistent}.
Given two consistent linear orders $[x_1,\ldots,x_n] \in \order{X}$ and $[y_1,\ldots,y_m] \in \order{Y}$,
we denote by $[x_1,\ldots,x_n]+[y_1,\ldots,y_m]=S$
the set of their possible \emph{combinations}. $S$ contains those linear orders $[z_1,\ldots,z_p] \in \order{X\cup Y}$
such that for every pair $x_i < x_j$ (respectively $y_i < y_j$), there exists $z_k < z_l$ with
$z_k = x_i$ and $z_l = x_j$ (respectively $z_k = y_i$ and $z_l = y_j$).
Note that in general, there exists more than one possible combination.
Furthermore, we denote by $[x_1,\ldots,x_{i-1},x_i,x_{i+1},\ldots,x_n] - [x_i]$
the linear order $[x_1,\ldots,x_{i-1},x_{i+1},\ldots,x_n]$.

Throughout the whole section, let $\TD=(\tree,\bags)$ be a
normalized tree decomposition of a PCC $\prog = (\atoms,\const,\rules)$.
We present a dynamic programming algorithm,
traversing $\TD$ in bottom-up direction 
in order to compute whether $\prog$
admits an answer set. Ultimately, we will state properties
about subtrees of $\TD$ and inductively add more and more nodes,
until we get a statement about the whole tree. To this end, the following
notions become handy.
Given a node $n \in \tree$,
we denote by $\tree_n$ the subtree of $\tree$ rooted at $n$.
For a set $S \subseteq \atoms \cup \const \cup \rules$,
$\innode{n}{S}$ is a shorthand for $\bags(n) \cap S$.
Moreover, $\intree{n}{S} \coloneqq \bigcup_{m\in \tree_n} \innode{m}{S}$
and $\insubtree{n}{S} \coloneqq \intree{n}{S} \setminus \innode{n}{S}$.
Since the scope of a solution will always be limited to a subtree of the whole tree decomposition, the notion of
a model has to be refined with respect to a universe $U=\intree{n}{A}$.
To this end, the cardinality
of a constraint $c \in \const$ with respect to an interpretation
$I \subseteq \univ$ is given by
\[
\car(c,I,\univ) = \card{\{b \in \clause(c) \mid b \in I\}}+
\card{\{\neg b \in \clause(c) \mid b \in \univ \setminus I\}}.
\]
Then $I$ is a model of $c$ under universe $\univ$ (denoted by
$I \models_{\univ} c$) if $\low(c) \leq \car(c,I,\univ) \leq \upper(c)$.
Note that $\models_{\univ}$ and $\models$ coincide for $\univ = \atoms$.
Similarly, for a subset of constraints $\const' \subseteq \const$,
set $\pconst \subseteq \const'$ is a model of a rule $r \in \rules$ under restriction $\const'$,
denoted by $\pconst \models_{\const'} r$,
if $\head \in \pconst$ or $\body \cap \const' \not\subseteq \pconst$.

In order to facilitate the
discussion below, we define the following sum for constraint
$c \in \const$, interpretation $I \subseteq \univ$ over a set of
atoms $\univ \subseteq \atoms$ and linear order $\porder$ containing at least $I \cup \{c\}$:
\begin{eqnarray*}
\car_<(c,I,\univ,\porder) &=& \card{\{b \in \clause(c) \mid b \in I \wedge b < c\}} +\\
&& \card{\{\neg b \in \clause(c) \mid b \in \univ\setminus I\}}.
\end{eqnarray*}

The following definition helps us to find partial answer sets,
limited to the scope of a subtree of $\TD$.

\begin{definition}
\label{def:p-solution}
A \emph{partial solution} (for node $n \in \tree$) is a tuple
$\eassign=(n,\emodel,\econst,\erules,\eorder,\eweight,\elweight,\ederivation)$, with
interpretation $\emodel \subseteq \intree{n}{\atoms}$,
satisfied constraints $\econst \subseteq \intree{n}{\const}$,
satisfied rules $\erules \subseteq \intree{n}{\rules}$,
linear order $\eorder \in \order{\emodel \cup \econst \cup \intree{n}{\rules}}$,
cardinality functions $\eweight : \intree{n}{\const} \ra \N$ and
$\elweight : \econst \ra \N$, and
derivation witness $\ederivation = (\edrules,\edatoms,\edheads,\edbodies,\erealhead)$ with
derivation rules $\edrules \subseteq \intree{n}{\rules}$,
derived atoms $\edatoms \subseteq \emodel$,
derivation head constraints $\edheads \subseteq \econst$,
derivation body constraints $\edbodies \subseteq \econst$, and
check function $\erealhead : \edheads \ra \{0,1\}$ such that
the following conditions are jointly satisfied:
\begin{enumerate}
\item $\econst \cap \insubtree{n}{\const} = \{c \in \insubtree{n}{\const} \mid \emodel \models_{\sintree{n}{\atoms}} c\}$
\item $\erules = \{r \in \intree{n}{\rules} \mid \econst \models_{\sintree{n}{\const}} r\}$ and
$\insubtree{n}{\rules} \subseteq \erules$
\item $\eweight(c) = \car(c,\emodel,\intree{n}{\atoms})$ for all $c \in \intree{n}{\const}$
\item $\elweight(c) = \car_<(c,\emodel,\intree{n}{\atoms},\eorder)$ for all $c \in \econst$
\item $\edatoms = \{a \in \emodel \mid c \in \edheads, a \in \clause(c), a > c\}$ and
$\emodel \cap \insubtree{n}{\atoms} \subseteq \edatoms$
\item $\edbodies = \bigcup_{r \in \edrules} \body$ and $\edbodies \subseteq \econst$
\item $c \in \body \Ra r > c$ for all $c \in \edbodies$ and $r \in \edrules$
\item $\low(c) \leq \elweight(c)$ for all $c \in \edbodies \cap \insubtree{n}{\const}$
\item $\erealhead(c) = 1 \LR \exists r \in \edrules$ with $\head = c$ and $c > r$
\item $\erealhead(c) = 1$ for all $c \in \edheads \cap \insubtree{n}{\const}$
\end{enumerate}
\end{definition}

The idea of this data structure is that, for
some atom, clause, or rule that is no longer ``visible'' in the current
bag but was included in the subtree, the containment in one of the sets of $\eassign$
is strictly what one would expect from an answer set, while
for elements that are still visible this containment does not have
to fulfill that many conditions and can be seen as some sort of ``guess''.
For example, $\econst \cap \insubtree{n}{\const}$, the set of constraints in $\econst$
that are no longer visible, indeed contains exactly the constraints that are
satisfied under interpretation $\emodel$, i.e.,
$\{c \in \insubtree{n}{\const} \mid \emodel \models_{\sintree{n}{\atoms}} c\}$,
while $\econst \cap \innode{n}{\const}$
represents the guess of those constraints, we still want to become true when we
further traverse the tree towards the root node.
$\emodel, \econst, \erules,$ and $\eweight$ are used to ensure that the answer
set is a model of our program. $\eorder$ is the strict linear order, whose existence
is demanded in the definition of answer sets.
$\elweight$ will be used to check condition (R3) of stable models, i.e.,
it will contain the cardinality on the left side of the equation in (R3).
The derivation of atoms $a \in \emodel$ is represented by $\ederivation$.
The definition of answer sets requires for each $a \in \emodel$ the existence of
some rule $r \in \rules$ satisfying (R1)-(R3). The set of those rules
will be represented by $\edrules$.
Sets $\edheads$ and $\edbodies$ contain the head, and respectively, body constraints
of the rules in $\edrules$.
The set $\edatoms$ contains those atoms, for which we already found
a head constraint to derive it. $\erealhead$ is a utility function, which ensures that each
(guessed) constraint in $\edheads$ is indeed the head of some rule in $\edrules$.
Thereby $\erealhead(c) = 1$ marks that such a rule was found.

Note that, w.l.o.g., we may assume that
the root node of a normalized tree decomposition has an empty
bag. Indeed, this
can always be achieved by introducing at most $\tw(\prog)+1$
additional nodes above the root of a given tree decomposition.
Then the following proposition shows the correspondence between
answer sets
and
partial
solutions
for the root node of a given normalized tree decomposition.

\begin{proposition}
\label{prop:consistency}
Let $\rootnode$ be the root node of $\tree$ and let $\bag(\rootnode) = \emptyset$.
Then $\AS{\prog} \neq \emptyset$ if and only if there exists a partial solution $\eassign=(\rootnode,\emodel,\econst,\erules,\eorder,\eweight,\elweight,\ederivation)$ for $\rootnode$.
\end{proposition}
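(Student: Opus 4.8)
The plan is to prove both directions of the biconditional by exploiting the defining conditions of a partial solution (Definition~\ref{def:p-solution}) together with the characterization of answer sets from Proposition~\ref{prop:as}. The crucial simplification is that at the root node $\rootnode$ we have $\bag(\rootnode)=\emptyset$, so $\innode{\rootnode}{S}=\emptyset$ and hence $\intree{\rootnode}{S}=\insubtree{\rootnode}{S}$ for every $S\in\{\atoms,\const,\rules\}$. This means $\intree{\rootnode}{\atoms}=\atoms$, $\intree{\rootnode}{\const}=\const$, $\intree{\rootnode}{\rules}=\rules$, and all the ``strict'' variants of the conditions (those referring to $\insubtree{\rootnode}{\cdot}$) apply to \emph{every} element. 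Consequently, at the root the ``guessing'' flexibility described after the definition disappears entirely, and each component of $\eassign$ is forced to coincide exactly with what an answer set prescribes.

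For the forward direction, I would start from a nonempty $\AS{\prog}$, pick $\pmodel\in\AS{\prog}$, and use Proposition~\ref{prop:as} to obtain a model $\pmodel\models\prog$ together with a strict linear order $<$ on $\pmodel$ witnessing stability via (R1)--(R3). From these data I construct a tuple $\eassign$ for $\rootnode$: set $\emodel=\pmodel$; let $\econst=\{c\in\const\mid\pmodel\models c\}$; let $\erules=\rules$ (every rule is satisfied since $\pmodel\models\prog$); extend the order $<$ on $\pmodel$ to a linear order $\eorder$ on $\emodel\cup\econst\cup\rules$ that respects the dependencies demanded by conditions~(7) and~(9), i.e.\ placing each body constraint before its rule and each derivation rule before the head constraint it derives; and define $\eweight,\elweight$ by the cardinality formulas in conditions~(3) and~(4). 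For the derivation witness I let $\edrules$ collect, for each $a\in\pmodel$, a rule satisfying (R1)--(R3), and set $\edatoms,\edheads,\edbodies,\erealhead$ accordingly. Then I verify conditions~1--10 one by one; most are immediate from the construction, and conditions~5,~8,~10 are exactly the translations of (R1),~(R3), and the head requirement.

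For the backward direction, I would take a partial solution $\eassign$ for $\rootnode$ and show $\emodel$ is an answer set. Conditions~1 and~2, specialized via $\insubtree{\rootnode}{\cdot}=\intree{\rootnode}{\cdot}$, force $\econst=\{c\in\const\mid\emodel\models c\}$ and $\erules=\{r\mid\econst\models r\}=\rules$, which gives $\emodel\models\prog$. To establish stability, I read $<$ off $\eorder$ restricted to $\emodel$ and check (R1)--(R3) for each $a\in\emodel$: condition~5 (with $\emodel\cap\insubtree{\rootnode}{\atoms}=\emodel$) guarantees every atom lies in $\edatoms$, hence has a witnessing head constraint $c\in\edheads$ with $a<c$ and (via condition~9,~10) a rule $r\in\edrules$ deriving it; conditions~6--8 then deliver (R2) and the lower-bound inequality (R3); the translation from $\elweight$ back to the cardinality in (R3) uses condition~4 together with the fact that the order on atoms below $c$ matches $b<a$ after identifying the witness with $a$. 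The main obstacle I anticipate is precisely this last bookkeeping: reconciling the constraint-indexed quantity $\elweight(c)=\car_<(c,\emodel,\atoms,\eorder)$, which counts literals $b<c$, with the atom-indexed quantity in (R3), which counts literals $b<a$; bridging it requires using the placement of each derived atom immediately before its head constraint in $\eorder$ (so that $b<a$ and $b<c$ coincide on the relevant literals), and verifying that such a placement is always consistent with the combination rules on linear orders. I would isolate this correspondence as the technical heart of the argument and treat the remaining conditions as routine verification.
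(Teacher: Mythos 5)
Your proposal follows essentially the same route as the paper's proof: in the forward direction you build the partial solution from an answer set and the order/rule-assignment supplied by Proposition~\ref{prop:as}, interleaving body constraints, rule, and head constraint just before the derived atom (exactly the paper's insertion of $c_1,\dots,c_j,r,c_{j+1},a_r$), and in the backward direction you use the collapse $\insubtree{\rootnode}{\cdot}=\intree{\rootnode}{\cdot}$ to eliminate the guessing and chain conditions 5, 9, 10, 6, and 8 to recover (R1)--(R3). The bookkeeping issue you isolate is real but resolves exactly as you suggest (placement of constraints adjacent to the witnessing atom in one direction, and the inclusion $\{b<c\}\subseteq\{b<a\}$ from $c<a$ in the other), so the proposal is correct and matches the paper's argument.
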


\begin{proof}
($\Rightarrow$) Given an answer set $\pmodel \in \AS{\prog}$, we construct a partial solution $\eassign$ for $\rootnode$ with derivation witness $\ederivation = (\edrules,\edatoms,\edheads,\edbodies,\erealhead)$ as follows.
Let $\emodel \coloneqq \pmodel$, let $\econst \coloneqq \{c \in \const : \pmodel \models c\}$ and let $\erules \coloneqq \rules$.
Let $\porder \coloneqq [a_1,\dots,a_{\card{\pmodel}}] \in \order{\pmodel}$ be the linear order from Proposition~\ref{prop:as} and let $f: M \ra \rules$ be the function that assigns each atom $a \in \pmodel$ the rule $r \in \rules$ that satisfies conditions (R1)--(R3) of Proposition~\ref{prop:as} for $a$.
Furthermore, let $\edrules \coloneqq \{f(a) : a \in \pmodel \}$.
In order to create $\eorder$, we modify $\porder$ as follows.
For every $r \in \edrules$ let $a_r$ be the smallest atom in $\porder$ such that $f(a_r) = r$.
Atom $a_r$ is then replaced in $\porder$ by the sequence $c_1,\dots,c_j,r,c_{j+1},a_r$, where $\{c_1,\dots,c_j\} = \body$ and $c_{j+1} = \head$.
Note that by construction $\{c_1,\dots,c_{j+1}\} \subseteq \econst$.
The remaining clauses from $\econst$ as well as the rules $\rules \setminus \prules$ are arbitrarily appended at the end of $\eorder$.
For every constraint $c \in \const$ we set $\eweight(c) \coloneqq \car(c,\pmodel,\atoms)$.
For every constraint $c \in \econst$ we set $\elweight(c) \coloneqq \car_<(c,\pmodel,A,\eorder)$.
Let $\edatoms \coloneqq \pmodel$,
let $\edheads \coloneqq \{\head : r \in \edrules\}$, and
let $\edbodies \coloneqq \bigcup_{r \in \edrules} \body$.
Finally, let $\erealhead(c) \coloneqq 1$ for all $c \in \edheads$.
We show now that $\eassign$ is indeed a partial solution by checking conditions 1--10 of Definition~\ref{def:p-solution}.
Conditions 1--4, 6--7, and 9--10 are satisfied by construction.
For each $a \in \pmodel$ let $c_a \coloneqq H(f(a))$.
Then $c_a \in \edheads$, $a \in \clause(c_a)$, and $c_a < a$.
Therefore, $\edatoms = \{a \in \emodel \mid c \in \edheads, a \in \clause(c), a > c\}$ which satisfies condition 5.
Condition 8 is satisfied because of (R3) of Proposition~\ref{prop:as}.
Hence $\eassign$ is a partial solution for $\rootnode$.

($\Leftarrow$)
For the other direction, the requirement that $\bag(\rootnode) = \emptyset$ ensures, that the guessing part of a given partial solution $\eassign$ is nonexistent.
Therefore, $\econst = \{c \in \const : \emodel \models c\}$ and $\erules = \{r \in \rules \mid \econst \models r\}= \rules$.
This ensures that $\emodel \models \prog$ and is therefore a model of $\prog$.
Let the linear order $\porder$ be the restriction of $\eorder$ to the set $\emodel$.
Let $a \in \emodel$ be an arbitrary atom.
By condition 5 of Proposition~\ref{prop:as} there exists a constraint $c \in \edheads$ with $a \in \clause(c)$ and $a > c$.
Therefore, by condition 9 and 10 there exists a rule $r \in \edrules$ with $\head = c$ and $c > r$.
We now show that rule $r$ is the one fulfilling (R1)--(R3) of Proposition~\ref{prop:as} for atom $a$.
(R1) is satisfied by construction.
By condition 6, $\body \subseteq \econst$.
Therefore, $\emodel \models \body$, satisfying (R2).
Finally, (R3) is satisfied through condition 8.
This shows that $\emodel$ is indeed an answer set of $\prog$.
\end{proof}

An algorithm that computes all partial solutions at each node of the tree decomposition is highly inefficient, since the size and the number of such solutions can grow exponentially in the input size.
Therefore we introduce \emph{bag assignments}, which is a data structure similar to partial solutions, but instead of ranging over the whole subtree, their scope is restricted to a single bag of the tree decomposition.
But we are not interested in arbitrary bag assignments.
Instead we consider only those, which can be seen as the projection of a partial solution for node $n$ to the bag of node $n$. Formally this is stated as follows:

\begin{definition}
\label{def:assign}
A \emph{bag assignment (for node $n \in \tree$)} is a tuple $\assign=(n,\pmodel,\pconst,\prules,\porder,\pweight,\lweight,\derivation)$, with partial model $\pmodel \subseteq \innode{n}{\atoms}$, satisfied constraints $\pconst \subseteq \innode{n}{\const}$, satisfied rules $\prules \subseteq \innode{n}{\rules}$, linear order $\porder \in \order{\pmodel \cup \pconst \cup \innode{n}{\rules}}$, cardinality functions $\pweight : \innode{n}{\const} \ra \N$ and $\lweight : \pconst \ra \N$, and derivation witness $\derivation = (\drules,\datoms,\dheads,\dbodies,\realhead)$ with derivation rules $\drules \subseteq \innode{n}{\rules}$, derived atoms $\datoms \subseteq \pmodel$, derivation head constraints $\dheads \subseteq \pconst$, derivation body constraints $\dbodies \subseteq \pconst$, and check function $\realhead : \dheads \ra \{0,1\}$.
\end{definition}

\begin{definition}
\label{def:model}
A bag assignment $\assign$ for node $n$ with $\assign=(n,\pmodel,\pconst,\prules,\porder,\pweight,\lweight,\derivation)$ and $\derivation = (\drules,\datoms,\dheads,\dbodies,\realhead)$ is called a \emph{bag model (for node $n$)} if there exists a partial solution $\eassign=(n,\emodel,\econst,\erules,\eorder,\eweight,\elweight,\ederivation)$, with $\ederivation = (\edrules,\edatoms,\edheads,\edbodies,\erealhead)$ such that
\begin{itemize}
\item $\emodel \cap \bag(n) = \pmodel$,\quad
      $\econst \cap \bag(n) = \pconst$,\quad
      $\erules \cap \bag(n) = \prules$,
\item $\eorder$ and $\porder$ are consistent,
\item $\eweight(c) = \pweight(c)$,\quad
      $\elweight(c) = \lweight(c)$\quad
      for all $c \in \innode{n}{\const}$,
\item $\edrules \cap \bag(n) = \drules$,\quad
      $\edatoms \cap \bag(n) = \datoms$,
\item $\edheads \cap \bag(n) = \dheads$,\quad
      $\edbodies \cap \bag(n) = \dbodies$,
\item $\erealhead(c) = \realhead(c)$\quad for all $c \in \dheads$.
\end{itemize}
\end{definition}

\noindent
Indeed, it turns out that it is sufficient to maintain only bag models
during the tree traversal.

\begin{proposition}
\label{prop:rootmodel}
Let $\rootnode$ be the root node of $\tree$, and
let $\bag(\rootnode) = \emptyset$.
Then $\AS{\prog} \neq \emptyset$ if and only if
$\assign=(\rootnode,\emptyset,\emptyset,\emptyset,[],\emptyset,\emptyset,\derivation)$
with $\derivation = (\emptyset,\emptyset,\emptyset,\emptyset,\emptyset)$
is a bag model for $\rootnode$.
\end{proposition}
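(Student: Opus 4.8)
The plan is to reduce this statement directly to Proposition~\ref{prop:consistency}, which already characterizes $\AS{\prog} \neq \emptyset$ as the existence of a partial solution $\eassign$ for $\rootnode$. It therefore suffices to prove that, under the assumption $\bag(\rootnode) = \emptyset$, the specific bag assignment $\assign$ named in the statement is a bag model for $\rootnode$ if and only if some partial solution for $\rootnode$ exists. Chaining this equivalence with Proposition~\ref{prop:consistency} then yields the claim.

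First I would observe that $\assign$ is in fact the \emph{only} bag assignment for $\rootnode$. Since $\bag(\rootnode) = \emptyset$ we have $\innode{\rootnode}{\atoms} = \innode{\rootnode}{\const} = \innode{\rootnode}{\rules} = \emptyset$, so by Definition~\ref{def:assign} every component of a bag assignment for $\rootnode$ (the partial model, the satisfied constraints, the satisfied rules, the linear order, the two cardinality functions, and every part of the derivation witness) is forced to be empty, exactly matching the components of $\assign$.

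Next I would unfold Definition~\ref{def:model}. By that definition, $\assign$ is a bag model precisely when there is a partial solution $\eassign$ for $\rootnode$ satisfying the listed compatibility conditions. The key point is that each of these conditions becomes trivial once $\bag(\rootnode) = \emptyset$: every condition of the form $S \cap \bag(\rootnode) = (\cdot)$ reduces to $\emptyset = \emptyset$, matching the empty components of $\assign$; the two equalities of cardinality functions are universally quantified over $c \in \innode{\rootnode}{\const} = \emptyset$ and are hence vacuous; the condition $\erealhead(c) = \realhead(c)$ ranges over $c \in \dheads = \emptyset$ and is likewise vacuous; and the requirement that $\eorder$ and $\porder = []$ be consistent holds automatically, since the empty order has no pairs and thus cannot witness an inconsistency with any linear order.

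Hence $\assign$ is a bag model for $\rootnode$ if and only if a partial solution for $\rootnode$ exists, which by Proposition~\ref{prop:consistency} holds if and only if $\AS{\prog} \neq \emptyset$. I do not expect a genuine obstacle here: the only content beyond bookkeeping is the consistency of the empty order, and that follows immediately from the definition of consistency of linear orders (there are no pairs $x_i < x_j$ in $[]$ that could conflict). The only care needed is to verify that \emph{every} compatibility condition of Definition~\ref{def:model} degenerates, which the empty root bag guarantees.
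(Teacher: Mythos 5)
Your proposal is correct and follows exactly the paper's argument: the paper also observes that, because $\bag(\rootnode)=\emptyset$, every partial solution for $\rootnode$ trivially satisfies the compatibility conditions of Definition~\ref{def:model} with the all-empty bag assignment $\assign$, and then invokes Proposition~\ref{prop:consistency}. You merely spell out the vacuity checks in more detail than the paper does.
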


\begin{proof}
Since $\bag(\rootnode) = \emptyset$, every partial solution
for $\rootnode$ is an extension of $\assign$ according to the conditions of Definition~\ref{def:model}.
Therefore, this statement follows from Proposition~\ref{prop:consistency}.
\end{proof}

By the same argument as for the root node, we may assume that $\bag(n) = \emptyset$ for leaf nodes $n$.
Now a dynamic programming algorithm can be achieved, by creating the only possible bag model $\assign=(n,\emptyset,\emptyset,\emptyset,[],\emptyset,\emptyset,\derivation)$ with $\derivation = (\emptyset,\emptyset,\emptyset,\emptyset,\emptyset)$ for each leaf $n$, and then propagating these bag models
along the paths to the root node.
Thereby the bag models are altered according to rules, which depend only on the bag of the current node.
In order to sketch the cornerstones of the dynamic programming algorithm more clearly, 
we distinguish between eight types of nodes in the tree decomposition:
leaf (L), branch (B), atom introduction (AI), atom removal (AR), rule introduction (RI), rule removal (RR), constraint introduction (CI), and constraint removal (CR) node.
The last six types will be often augmented with the element $e$ (either an atom, a rule, or a constraint) which is removed or added compared to the bag of the child node.

Next we define a relation $\prec_\TD$ between bag assignments,
which will be used to propagate bag models in a bottom-up direction
along the tree decomposition $\TD$.
Afterwards we demonstrate the intuition of these rules with the help of a small example.

\begin{definition}
\label{def:rules}
Let $\assign=(n,\pmodel,\pconst,\prules,\porder,\pweight,\lweight,\derivation)$ and $\assign'=(n',\pmodel',\pconst',\prules',\porder',\pweight',\lweight',\derivation')$ with $\derivation = (\drules,\datoms,\dheads,\dbodies,\realhead)$ and $\derivation' = (\drules',\datoms',\dheads',\dbodies',\realhead')$ be bag assignments for nodes $n,n' \in \tree$.
We relate $\assign' \prec_\TD \assign$ if $n$ has a single child $n'$ and the following properties are satisfied, depending on the node type of $n$:
\begin{description}
\item{($r$-RR)}: $r\in \prules'$ and
\begin{eqnarray*}
\assign &=& (n,\pmodel',\pconst',\prules'\setminus \{r\},\porder'-[r],\pweight',\lweight',\derivation), \mbox{ with}\\
\derivation &=& (\drules'\setminus \{r\},\datoms',\dheads',\dbodies',\realhead').
\end{eqnarray*}

\item{($r$-RI)}:
\[
\assign \in \{(n,\pmodel',\pconst',\prules^*,\porder^*,\pweight',\lweight',\derivation) \mid
\porder^* \in (\porder' + [r])\},\mbox{ with}
\]
\[
\prules^* = \begin{cases}
  \prules' \cup \{r\}     & \mbox{if } \pconst' \models_{\innode{n}{\const}} r \\
  \prules' & \mbox{otherwise}
\end{cases}
\]
and one of the following two groups of properties has to be satisfied:

\begin{itemize}
\item ``$r$ is used'':
$\head \in \innode{n}{\const} \Ra (\head \in \dheads' \wedge \head > r)$,
for all $b \in \body \cap \innode{n}{\const}:$ $b \in \pconst' \wedge b < r$, and
\[
\derivation = (\drules'\cup \{r\},\datoms',\dheads',\dbodies'\cup (\body \cap \innode{n}{\const}),\realhead^*), \mbox{ with}
\]
\[
\realhead^*(c) = \begin{cases}
  1 & \mbox{if } c = \head \\
  \realhead'(c) & \mbox{otherwise}.
\end{cases}
\]

\item ``$r$ is not used'': $\derivation = \derivation'$.
\end{itemize}

\item ($a$-AR):
$a \in \pmodel' \Ra a \in \datoms'$ and
\begin{eqnarray*}
\assign &=& (n,\pmodel'\setminus \{a\},\pconst',\prules',\porder'-[a],\pweight',\lweight',\derivation), \mbox{ with}\\
\derivation &=& (\drules',\datoms'\setminus \{a\},\dheads',\dbodies',\realhead').
\end{eqnarray*}

\item ($a$-AI):
One of the following two groups of properties has to be satisfied:

\begin{itemize}
\item ``set $a$ to false'':
\[
\assign = (n,\pmodel',\pconst',\prules',\porder',\pweight^*,\lweight^*,\derivation'), \mbox{ with}\\
\]
$\pweight^*(c) = \pweight'(c) + \car(c,\pmodel',\innode{n}{\atoms}) - \car(c,\pmodel',\innode{n'}{\atoms})$, and\\
$\lweight^*(c) = \lweight'(c) + \car_<(c,\pmodel',\innode{n}{\atoms},\porder') - \car_<(c,\pmodel',\innode{n'}{\atoms},\porder')$.

\item ``set $a$ to true'':
\begin{eqnarray*}
\assign &\in&  \{(n,\pmodel^*=\pmodel'\cup \{a\},\pconst',\prules',\porder^*,\pweight^*,\lweight^*,\derivation) \mid\\
&& \porder^* \in (\porder' + [a])\}, \mbox{ with}\\
\derivation &=& (\drules',\datoms'\cup\datoms^*,\dheads',\dbodies',\realhead'), \mbox{ where}
\end{eqnarray*}
\[
\datoms^* = \begin{cases}
  \{a\} & \mbox{if } \exists c \in \dheads', a\in \clause(c), a>c\\
  \emptyset & \mbox{otherwise},
\end{cases}
\]
$\pweight^*(c) = \pweight'(c) + \car(c,\pmodel^*,\innode{n}{\atoms}) - \car(c,\pmodel',\innode{n'}{\atoms})$, and\\
$\lweight^*(c) = \lweight'(c) + \car_<(c,\pmodel^*,\innode{n}{\atoms},\porder^*) - \car_<(c,\pmodel',\innode{n'}{\atoms},\porder')$.
\end{itemize}

\item ($c$-CR):
$c \in \pconst' \LR \low(c) \leq \pweight'(c) \leq \upper(c)$,
$c \in \dheads' \Ra \realhead'(c) = 1$,
$c \in \dbodies' \Ra \lweight'(c) \geq \low(c)$, and
\begin{eqnarray*}
\assign&=&(n,\pmodel',\pconst'\setminus \{c\},\prules',\porder'-[c],\pweight',\lweight',\derivation), \mbox{ with}\\
\derivation&=&(\drules',\datoms',\dheads'\setminus \{c\},\dbodies'\setminus \{c\},\realhead').
\end{eqnarray*}

\item ($c$-CI):
One of the following two groups of properties has to be satisfied:

\begin{itemize}
\item ``set $c$ to false'':
$c \not\in \body \wedge c \neq \head$ for all $r \in \drules'$, and
\[
\assign=(n,\pmodel',\pconst',\prules'\cup\prules^*,\porder',\pweight'\cup\pweight^*,\lweight',\derivation'),\mbox{ with}
\]
$\prules^* = \{r \in \innode{n}{\rules} \mid \pconst' \models_{\innode{n}{\const}} r\}$, and
$\pweight^* = \{(c,\car(c,\pmodel',\innode{n}{\atoms})\}$.

\item ``set $c$ to true'':
$(c\in \body \Ra r > c) \wedge (c = \head \Ra r < c)$ for all $r \in \drules'$, and
\begin{eqnarray*}
\assign \hspace{-0.1cm} &\in& \hspace{-0.1cm}\{(n,\pmodel',\pconst^* = \pconst'\cup \{c\},\prules'\cup\prules^*,\porder^*,\pweight^*,\lweight^*,\derivation)\mid \\
&& \porder^*\in(\porder'+[c])\}, \mbox{ with}\\
\derivation \hspace{-0.1cm} &=& \hspace{-0.1cm} (\drules',\datoms'\cup \datoms^*,\dheads'\cup \dheads^*,\dbodies'\cup \dbodies^*,\realhead^*), \mbox{ where}
\end{eqnarray*}
$\prules^* = \{r \in \innode{n}{\rules} \mid \pconst^* \models_{\innode{n}{\const}} r\}$,
$\pweight^* = \pweight'\cup\{(c,\car(c,\pmodel',\innode{n}{\atoms})\}$,\\
$\lweight^* = \lweight'\cup\{(c,\car_<(c,\pmodel',\innode{n}{\atoms},\porder^*)\}$,
\begin{eqnarray*}
\dbodies^* &=& \begin{cases}
  \{c\} & \mbox{if } \exists r \in \drules': c \in \body\\
  \emptyset & \mbox{otherwise},
\end{cases}\\
\dheads^* &\in& \begin{cases}
  \{\{c\}\} & \mbox{if } \exists r \in \drules': c = \head\\
  \{\emptyset,\{c\}\} & \mbox{otherwise},
\end{cases}
\end{eqnarray*}
$\datoms^* = \{a \in \pmodel' \mid a \in \clause(c), c\in \dheads^*, a > c\}$, and\\
$\realhead^*(c) = 1 \LR c \in \dheads^* \wedge \exists r \in \drules: \head=c$.
\end{itemize}
\end{description}
\end{definition}

\smallskip
\noindent
For branch nodes, we
extend (with slight abuse of notation) $\prec_\TD$ to a ternary relation.

\begin{definition}
\label{def:branch}
Let
$\assign=(n,\pmodel,\pconst,\prules,\porder,\pweight,\lweight,\derivation)$,
$\assign'=(n',\pmodel',\pconst',\prules',\porder',\pweight',\lweight',\derivation')$, and
$\assign''=(n'',\pmodel'',\pconst'',\prules'',\porder'',\pweight'',\lweight'',\derivation'')$
be bag assignments for nodes $n, n', n'' \in T$ with
$\derivation = (\drules,\datoms,\dheads,\dbodies,\realhead)$,
$\derivation' = (\drules',\datoms',\dheads',\dbodies',\realhead')$, and
$\derivation'' = (\drules'',\datoms'',\dheads'',\dbodies'',\realhead'')$.
We relate $(\assign',\assign'')\prec_\TD \assign$ if
$n$ has two children $n'$ and $n''$ and the following conditions are fulfilled.
\begin{itemize}

\item $\pmodel = \pmodel' = \pmodel''$\quad
      $\pconst = \pconst' = \pconst''$
\item $\prules = \prules' \cup \prules''$\quad
      $\porder = \porder' = \porder''$
\item $\pweight(c) = \pweight'(c) + \pweight''(c) - \car(c,\pmodel,\innode{n}{\atoms})$ for all $c\in\innode{n}{\const}$
\item $\lweight(c) = \lweight'(c) + \lweight''(c) - \car_<(c,\pmodel,\innode{n}{\atoms},\porder)$ for all $c\in\pconst$
\item $\drules = \drules' = \drules''$\quad
      $\datoms = \datoms' \cup \datoms''$
\item $\dheads = \dheads' = \dheads''$\quad
      $\dbodies = \dbodies' \cup \dbodies''$
\item $\realhead(c) = \max\{\realhead'(c),\realhead''(c)\}$ for all $c\in\dheads$
\end{itemize}
\end{definition}

What follows is a small example which demonstrates how this $\prec_\TD$ relation is used to solve the consistency problem for PCCs.
Thereby we start with the only possible bag model $\assign=(n,\emptyset,\emptyset,\emptyset,[],\emptyset,\emptyset,\derivation)$ and $\derivation = (\emptyset,\emptyset,\emptyset,\emptyset,\emptyset)$ for each leaf node.
Now we traverse through the tree decomposition and calculate for each node all the bag assignments according to the relation $\prec_\TD$.
Finally, we check whether for the root node any such bag assignment could be generated.

\begin{example}
\label{ex:algo}
We are given a PCC $\Pi=(\{p_1,p_2\},\{c_1,c_2\},\{r_1\})$ with $c_1=(\{(p_1,1)\},1,1)$, $c_2 =(\{(\neg p_2,1)\},1,1)$, and $r_1 =(c_1,\{c_2\})$.

Its incidence graph as well as a normalized tree decomposition of width 1 are depicted in Figure~\ref{fig:ex}.
What follows is a list of all the bag assignments that can be computed according to the relation $\prec_\TD$, starting from the trivial bag assignments of the empty leaf nodes.

\smallskip
\noindent
Node $n_{1}$: (L)
\begin{align*}
  \vartheta_{1} &= (n_{1},\emptyset,\emptyset,\emptyset,[],\emptyset,\emptyset,(\emptyset,\emptyset,\emptyset,\emptyset,\emptyset))
\end{align*}

\noindent
Node $n_{2}$: ($p_1$-AI)
\begin{align*}
  \vartheta_{2,1} &= (n_{2},\emptyset,\emptyset,\emptyset,[],\emptyset,\emptyset,(\emptyset,\emptyset,\emptyset,\emptyset,\emptyset))\\
  \vartheta_{2,2} &= (n_{2},\{p_1\},\emptyset,\emptyset,[p_1],\emptyset,\emptyset,(\emptyset,\emptyset,\emptyset,\emptyset,\emptyset))
\end{align*}

\noindent
Node $n_{3}$: ($c_1$-CI)
\begin{align*}
  \vartheta_{3,1} &= (n_{3},\emptyset,\emptyset,\emptyset,[],\{(c_1,0)\},\emptyset,(\emptyset,\emptyset,\emptyset,\emptyset,\emptyset))\\
  \vartheta_{3,2} &= (n_{3},\emptyset,\{c_1\},\emptyset,[c_1],\{(c_1,0)\},\{(c_1,0)\},(\emptyset,\emptyset,\emptyset,\emptyset,\emptyset))\\
  \vartheta_{3,3} &= (n_{3},\emptyset,\{c_1\},\emptyset,[c_1],\{(c_1,0)\},\{(c_1,0)\},(\emptyset,\emptyset,\{c_1\},\emptyset,\{(c_1,0)\}))\\
  \vartheta_{3,4} &= (n_{3},\{p_1\},\emptyset,\emptyset,[p_1],\{(c_1,1)\},\emptyset,(\emptyset,\emptyset,\emptyset,\emptyset,\emptyset))\\
  \vartheta_{3,5} &= (n_{3},\{p_1\},\{c_1\},\emptyset,[c_1,p_1],\{(c_1,1)\},\{(c_1,0)\},(\emptyset,\emptyset,\emptyset,\emptyset,\emptyset))\\
  \vartheta_{3,6} &= (n_{3},\{p_1\},\{c_1\},\emptyset,[c_1,p_1],\{(c_1,1)\},\{(c_1,0)\},(\emptyset,\{p_1\},\{c_1\},\emptyset,\{(c_1,0)\}))\\
  \vartheta_{3,7} &= (n_{3},\{p_1\},\{c_1\},\emptyset,[p_1,c_1],\{(c_1,1)\},\{(c_1,1)\},(\emptyset,\emptyset,\emptyset,\emptyset,\emptyset))\\
  \vartheta_{3,8} &= (n_{3},\{p_1\},\{c_1\},\emptyset,[p_1,c_1],\{(c_1,1)\},\{(c_1,1)\},(\emptyset,\emptyset,\{c_1\},\emptyset,\{(c_1,0)\}))
\end{align*}

\noindent
Node $n_{4}$: ($p_1$-AR)
\begin{align*}
  \vartheta_{4,1} &= (n_{4},\emptyset,\emptyset,\emptyset,[],\{(c_1,0)\},\emptyset,(\emptyset,\emptyset,\emptyset,\emptyset,\emptyset))\\
  \vartheta_{4,2} &= (n_{4},\emptyset,\{c_1\},\emptyset,[c_1],\{(c_1,0)\},\{(c_1,0)\},(\emptyset,\emptyset,\emptyset,\emptyset,\emptyset))\\
  \vartheta_{4,3} &= (n_{4},\emptyset,\{c_1\},\emptyset,[c_1],\{(c_1,0)\},\{(c_1,0)\},(\emptyset,\emptyset,\{c_1\},\emptyset,\{(c_1,0)\}))\\
  \vartheta_{4,4} &= (n_{4},\emptyset,\{c_1\},\emptyset,[c_1],\{(c_1,1)\},\{(c_1,0)\},(\emptyset,\emptyset,\{c_1\},\emptyset,\{(c_1,0)\}))
\end{align*}

\noindent
Node $n_{5}$: ($r_1$-RI)
\begin{align*}
  \vartheta_{5,1} &= (n_{5},\emptyset,\emptyset,\emptyset,[r_1],\{(c_1,0)\},\emptyset,(\emptyset,\emptyset,\emptyset,\emptyset,\emptyset))\\
  \vartheta_{5,2} &= (n_{5},\emptyset,\{c_1\},\{r_1\},[r_1,c_1],\{(c_1,0)\},\{(c_1,0)\},(\emptyset,\emptyset,\emptyset,\emptyset,\emptyset))\\
  \vartheta_{5,3} &= (n_{5},\emptyset,\{c_1\},\{r_1\},[c_1,r_1],\{(c_1,0)\},\{(c_1,0)\},(\emptyset,\emptyset,\emptyset,\emptyset,\emptyset))\\
  \vartheta_{5,4} &= (n_{5},\emptyset,\{c_1\},\{r_1\},[r_1,c_1],\{(c_1,0)\},\{(c_1,0)\},(\{r_1\},\emptyset,\{c_1\},\emptyset,\{(c_1,1)\}))\\
  \vartheta_{5,5} &= (n_{5},\emptyset,\{c_1\},\{r_1\},[r_1,c_1],\{(c_1,0)\},\{(c_1,0)\},(\emptyset,\emptyset,\{c_1\},\emptyset,\{(c_1,0)\}))\\
  \vartheta_{5,6} &= (n_{5},\emptyset,\{c_1\},\{r_1\},[c_1,r_1],\{(c_1,0)\},\{(c_1,0)\},(\emptyset,\emptyset,\{c_1\},\emptyset,\{(c_1,0)\}))\\
  \vartheta_{5,7} &= (n_{5},\emptyset,\{c_1\},\{r_1\},[r_1,c_1],\{(c_1,1)\},\{(c_1,0)\},(\{r_1\},\emptyset,\{c_1\},\emptyset,\{(c_1,1)\}))\\
  \vartheta_{5,8} &= (n_{5},\emptyset,\{c_1\},\{r_1\},[r_1,c_1],\{(c_1,1)\},\{(c_1,0)\},(\emptyset,\emptyset,\{c_1\},\emptyset,\{(c_1,0)\}))\\
  \vartheta_{5,9} &= (n_{5},\emptyset,\{c_1\},\{r_1\},[c_1,r_1],\{(c_1,1)\},\{(c_1,0)\},(\emptyset,\emptyset,\{c_1\},\emptyset,\{(c_1,0)\}))
\end{align*}

\begin{figure}
\centering
\begin{tikzpicture}[node distance=1.5cm]
  \tikzstyle{every node}=[draw,circle,fill,inner sep=2pt]
  \node (c1) [label=left:$c_1$] {};
  \node (c2) [below of=c1,label=left:$c_2$] {};
  \node (p1) [right of=c1, label=right:$p_1$] {};
  \node (p2) [below of=p1,label=right:$p_2$] {};
  \node (r1) [label=left:$r_1$] at ++(-1,-0.75) {};
  \draw [-] (p1) -- (c1);
  \draw [-] (p2) -- (c2);
  \draw [-] (r1) -- (c1);
  \draw [-] (r1) -- (c2);
\end{tikzpicture}
\hspace{0.5cm}
\begin{tikzpicture}
\tikzstyle{every node}=[draw,rectangle]
\tikzstyle{level 1}=[level distance=0.75cm, sibling distance=2.5cm]
\node [label=left:$n_{14}$] {}
  child {node [label=left:$n_{13}$] {$r_1$}
    child {node [label=left:$n_{6}$] {$r_1$}
      child {node [label=left:$n_{5}$] {$c_1,r_1$}
        child {node [label=left:$n_{4}$] {$c_1$}
          child {node [label=left:$n_{3}$] {$c_1,p_1$}
            child {node [label=left:$n_{2}$] {$p_1$}
              child {node [label=left:$n_{1}$] {}}}}}}}
    child {node [label=left:$n_{12}$] {$r_1$}
      child {node [label=left:$n_{11}$] {$c_2,r_1$}
        child {node [label=left:$n_{10}$] {$c_2$}
          child {node [label=left:$n_{9}$] {$c_2,p_2$}
            child {node [label=left:$n_{8}$] {$p_2$}
              child {node [label=left:$n_{7}$] {}}}}}}}};
\end{tikzpicture}
\caption{Incidence graph (left) and tree decomposition (right) of Example~\ref{ex:algo}.}
\label{fig:ex}
\end{figure}
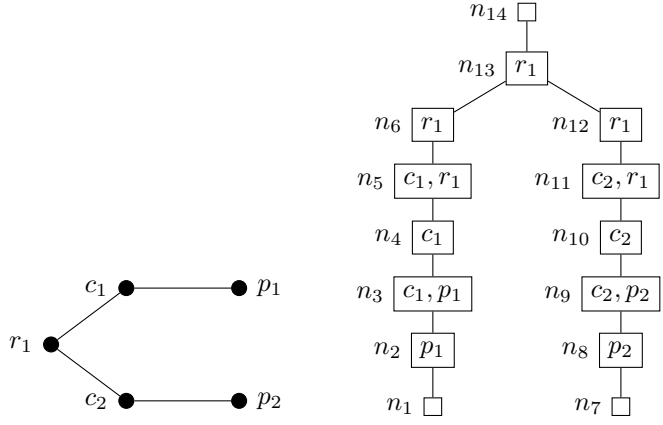

\noindent
Node $n_{6}$: ($c_1$-CR)
\begin{align*}
  \vartheta_{6,1} &= (n_{6},\emptyset,\emptyset,\emptyset,[r_1],\emptyset,\emptyset,(\emptyset,\emptyset,\emptyset,\emptyset,\emptyset))\\
  \vartheta_{6,2} &= (n_{6},\emptyset,\emptyset,\{r_1\},[r_1],\emptyset,\emptyset,(\{r_1\},\emptyset,\emptyset,\emptyset,\emptyset))
\end{align*}

The branch of nodes $n_7$ till $n_{12}$ is very similar to nodes $n_1$ till $n_6$.
Therefore we just present the bag assignments for $n_{12}$.

\smallskip
\noindent
Node $n_{12}$: ($c_2$-CR)
\begin{align*}
  \vartheta_{12,1} &= (n_{12},\emptyset,\emptyset,\emptyset,[r_1],\emptyset,\emptyset,(\emptyset,\emptyset,\emptyset,\emptyset,\emptyset))\\
  \vartheta_{12,2} &= (n_{12},\emptyset,\emptyset,\emptyset,[r_1],\emptyset,\emptyset,(\{r_1\},\emptyset,\emptyset,\emptyset,\emptyset))
\end{align*}

\noindent
Node $n_{13}$: (B)
\begin{align*}
  \vartheta_{13,1} &= (n_{13},\emptyset,\emptyset,\emptyset,[r_1],\emptyset,\emptyset,(\emptyset,\emptyset,\emptyset,\emptyset,\emptyset))\\
  \vartheta_{13,2} &= (n_{13},\emptyset,\emptyset,\{r_1\},[r_1],\emptyset,\emptyset,(\{r_1\},\emptyset,\emptyset,\emptyset,\emptyset))
\end{align*}

\noindent
Node $n_{14}$: ($r_1$-RR)
\begin{align*}
  \vartheta_{14} &= (n_{14},\emptyset,\emptyset,\emptyset,[],\emptyset,\emptyset,(\emptyset,\emptyset,\emptyset,\emptyset,\emptyset))
\end{align*}

Since $\vartheta_{14}$ could be derived, the example is a yes-instance of the consistency problem.
Indeed it has exactly one answer set $\{p_1\}$.
\exend
\end{example}

Let us look exemplarily at (CR) nodes in more detail.
Consider nodes $n$ which remove a constraint $c$, i.e., $\bag(n)=\bag(n') \setminus \{c\}$, where $n'$ is the child of $n$
(see, for instance,
the node with bag $\{p_3,c_3\}$ in the left branch of $\TD_{Ex}$ in
Figure~\ref{fig:td}, which is a $c_4$-removal node).
Let $\assign'=(n',\pmodel',\pconst',\prules',\porder',\pweight',\lweight',\derivation')$
with $\derivation' = (\drules',\datoms',\dheads',\dbodies',\realhead')$ be a bag model for $n'$.
We then create a bag model for $n$ as follows:
First we have to check whether the conditions
$c \in \pconst' \LR \low(c) \leq \pweight'(c) \leq \upper(c)$,
$c \in \dheads' \Ra \realhead'(c) = 1$, and
$c \in \dbodies' \Ra \lweight'(c) \geq \low(c)$ are satisfied.
Note that those checks correspond to the conditions 1, 10, and 8 of
Definition~\ref{def:p-solution}. They ensure that all
guesses with respect to $c$ are correct.
In the case of an affirmative answer,
we remove $c$ from all sets of $\assign'$ in order to create the new bag model
$\assign = (n,\pmodel',\pconst'\setminus \{c\},\prules',\porder'-[c],\pweight',\lweight',\derivation)$ with
$\derivation = (\drules',\datoms',\dheads'\setminus \{c\},\dbodies'\setminus \{c\},\realhead')$.

The following two theorems state that the rules defined above indeed
help in finding bag models.

\begin{theorem}[Soundness]
\label{the:soundness}
Given a bag model $\assign'$ (respectively bag models $\assign'$ and $\assign''$).
Then each bag assignment $\assign$
with $\assign' \prec_\TD \assign$ (respectively
$(\assign',\assign'') \prec_\TD \assign$) is a bag model.
\end{theorem}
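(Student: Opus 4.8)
The plan is to prove soundness by a case analysis over the eight node types, showing that for each rule in Definitions~\ref{def:rules} and~\ref{def:branch}, a bag assignment $\assign$ obtained from bag model(s) via $\prec_\TD$ is itself a bag model. By Definition~\ref{def:model}, this means I must exhibit a witnessing partial solution $\eassign$ for $n$, constructed from the partial solution(s) $\eassign'$ (and $\eassign''$) that witness $\assign'$ (and $\assign''$). So the overall strategy is: given a witness $\eassign'$ for $\assign'$, explicitly build a candidate witness $\eassign$ for $\assign$ by adding/removing the element $e$ to/from the appropriate component sets of $\eassign'$ (mirroring how $\assign$ differs from $\assign'$), and then verify (a) that $\eassign$ satisfies conditions~1--10 of Definition~\ref{def:p-solution}, and (b) that $\eassign$ and $\assign$ are related as required by Definition~\ref{def:model}.

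First I would handle the introduction nodes (AI, CI, RI), which are the interesting cases. Here the new element $e$ lies in $\bag(n)$ but not in $\intree{n'}{\cdot}$, so I extend $\eassign'$ by inserting $e$ into $\eorder'$ at the position dictated by $\porder^*$ and placing it (or not) into $\emodel,\econst,\erules$ exactly as the rule prescribes for $\assign$. The key observations that make the conditions go through are the tree-decomposition properties: because $e$ is freshly introduced, any atom/constraint/rule adjacent to $e$ in the incidence graph that already appeared in the subtree $\tree_{n'}$ must also lie in $\bag(n)$ (by connectedness, property~(3)). This guarantees that the local bookkeeping in the bag captures all relevant interactions, so that the cardinality updates $\pweight^*,\lweight^*$ correctly reflect $\car(c,\emodel,\intree{n}{\atoms})$ and $\car_<(\cdots)$ for the extended interpretation — establishing conditions~3 and~4. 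Conditions~5--10 concerning the derivation witness follow because the side-conditions attached to each rule (e.g. ``$r$ is used'' requiring $\head>r$ and body membership, or the $\dheads^*,\dbodies^*,\datoms^*$ definitions in CI) are precisely the local shadows of conditions~5--10.

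For the removal nodes (AR, CR, RR) I would argue dually: the element $e$ leaves the bag but stays in the subtree, so it migrates from $\innode{n'}{\cdot}$ into $\insubtree{n}{\cdot}$. The witness $\eassign$ is essentially $\eassign'$ with $e$ removed only from the bag-level description, keeping it in the subtree-level sets; the guard conditions in the removal rules (for CR: $c\in\pconst'\LR\low(c)\le\pweight'(c)\le\upper(c)$, the $\realhead'$ and $\lweight'$ checks) are exactly what is needed to promote $e$ from a ``guessed'' element to one satisfying the strict subtree conditions~1,~8,~10 of Definition~\ref{def:p-solution}. For the branch node (B) I combine witnesses $\eassign'$ and $\eassign''$: since the two child subtrees share exactly $\bag(n)$, the consistency of $\eorder'$ and $\eorder''$ lets me merge the two linear orders into a single $\eorder$, and the union/summation formulas for $\emodel,\prules,\pweight,\lweight,\datoms,\dbodies,\realhead$ correctly account for the overlap on the bag (the subtracted $\car$ and $\car_<$ terms prevent double-counting the shared bag elements).

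The main obstacle will be the branch node together with the handling of the linear orders throughout. Merging two consistent linear orders $\eorder',\eorder''$ into one $\eorder\in\order{\intree{n}{\emodel\cup\econst\cup\rules}}$ that simultaneously witnesses both children requires care: I must show a common linear extension exists and that it restricts correctly to each child's subtree order while agreeing on the shared bag (this is where consistency, via the $+$-combination of orders, is essential). A subtle related point is verifying that the cardinality functions $\eweight,\elweight$ — which depend on the global order $\eorder$ through $\car_<$ — remain consistent after merging, i.e. that the summation formula $\lweight(c)=\lweight'(c)+\lweight''(c)-\car_<(c,\pmodel,\innode{n}{\atoms},\porder)$ genuinely reconstructs $\car_<(c,\emodel,\intree{n}{\atoms},\eorder)$. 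This hinges on the fact that a constraint $c$ and any atom $b\in\clause(c)$ appearing in both subtrees must, by property~(3), lie in $\bag(n)$, so their relative order is fixed by $\porder$ and counted once; everything else is partitioned between the two subtrees. Establishing this disjointness-modulo-bag accounting rigorously is the crux of the soundness proof.
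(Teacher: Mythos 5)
Your proposal follows essentially the same route as the paper's proof: take the witnessing partial solution(s) for the child bag model(s), construct a witnessing partial solution for $n$ (unchanged for removal nodes, extended by the introduced element for introduction nodes, merged via a consistent combination of the orders and the double-counting corrections for branch nodes), and verify the conditions of Definition~\ref{def:p-solution} together with the projection requirements of Definition~\ref{def:model}. The paper only works out the ($r$-RR) and branch cases explicitly and declares the rest analogous, so your sketch, which additionally identifies the connectedness property of tree decompositions as the reason the local bookkeeping suffices in the introduction cases, is if anything slightly more complete, but it is not a different argument.
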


\begin{proof}
Let $\assign'$ be a bag model for $n' \in \tree$ and let
$\assign$ be a bag assignment for node $n \in \tree$ with $\assign' \prec_\TD \assign$.
Then $n'$ is the single child of $n$, with $n$ being of type
(RR), (RI), (AR), (AI), (CR), or (CI). Assume $n$ is a ($r$-RR) node.
According to Definition~\ref{def:rules},
we have $r \in \prules'$ with $\assign$ and $\assign'$ differing only in $\prules = \prules' \setminus \{r\}$,
$\porder=\porder'-[r]$, and $\drules=\drules'\setminus\{r\}$.
Since $\assign'$ is a bag model, there exists a partial solution $\eassign$ of $n'$,
satisfying all the conditions of Definition~\ref{def:model}.

Claim: $\eassign$ is also a partial solution of $n$.

\noindent
To verify this claim, we have to check the conditions of
Definition~\ref{def:p-solution}.
Since $\insubtree{n'}{\const} = \insubtree{n}{\const}$,
$\intree{n'}{\const} = \intree{n}{\const}$,
$\insubtree{n'}{\atoms} = \insubtree{n}{\atoms}$,
$\intree{n'}{\atoms} = \intree{n}{\atoms}$, and
$\intree{n'}{\rules} = \intree{n}{\rules}$, the only non-trivial
condition is number~2 where we have to check
$\insubtree{n}{\rules} \subseteq \erules$.
Since $r \in \prules'$ and $\prules' = \erules \cap \innode{n'}{\rules}$,
we have $r \in \erules$.
Hence, from $\insubtree{n'}{\rules} \subseteq \erules$
follows that $\insubtree{n}{\rules} = \insubtree{n'}{\rules} \cup \{r\} \subseteq \erules$.

Furthermore, the projection of $\eassign$ to the bag $\bag(n)$ is exactly
$\assign$, since $\assign'$ and $\assign$ differ only by the fact,
that $r$ is removed from every set in $\assign$. Therefore $\assign$ is a bag model.
Analogously the theorem can be checked for the other five node types above.

Now let $\assign'$ and $\assign''$ be bag models for $n',n'' \in T$ and
let $\assign$ be a bag assignment for node $n \in \tree$ with $(\assign',\assign'') \prec_\TD \assign$.
Then $n$ has two children $n'$ and $n''$ and all the properties of Definition~\ref{def:branch}
are satisfied. Since $\assign'$ and $\assign''$ are bag models,
there exist partial solutions $\eassign'$ of $n'$ and $\eassign''$ of $n''$.
Using these two partial solutions we construct
$\eassign=(n,\emodel'\cup\emodel'',\econst'\cup\econst'',\erules'\cup\erules'',\eorder,\eweight,\elweight,\ederivation)$ with $\ederivation = (\edrules'\cup\edrules'',\edatoms'\cup\edatoms'',\edheads'\cup\edheads'',\edbodies'\cup\edbodies'',\erealhead)$. Thereby $\eorder \in (\eorder'+\eorder'')$,
\begin{align*}
\eweight(c)& = \begin{cases}
\eweight'(c) & c \in \insubtree{n'}{\const},\\
\eweight''(c) & c \in \insubtree{n''}{\const},\\
\eweight'(c) + \eweight''(c) - \car(c,\innode{n}{\emodel},\innode{n}{\atoms}) & \mbox{otherwise},
\end{cases}
\end{align*}
\begin{align*}
\elweight(c)& = \begin{cases}
\elweight'(c) & c \in \insubtree{n'}{\const},\\
\elweight''(c) & c \in \insubtree{n''}{\const},\\
\elweight'(c) + \elweight''(c) - \car_<(c,\innode{n}{\emodel},\innode{n}{\atoms},\eorder) & \mbox{otherwise},
\end{cases}\\
\erealhead(c)& = \begin{cases}
\erealhead'(c) & c \in \edheads'\setminus\edheads'',\\
\erealhead''(c) & c \in \edheads''\setminus\edheads',\\
\max\{\erealhead'(c),\erealhead''(c)\} & \mbox{otherwise}.
\end{cases}
\end{align*}
One can now check the conditions of Definition~\ref{def:p-solution} in order
to verify that $\eassign$ is a partial solution for $n$.
Furthermore, our construction ensures that the projection of $\eassign$ to the bag $\bag(n)$
is exactly $\assign$, which is therefore a bag model.
\end{proof}

\begin{theorem}[Completeness]
\label{the:completeness}
Given a bag model $\assign$ for node $n\in \tree$.
Then either $n$ is a leaf node, or there exists
a bag model $\assign'$ (respectively two bag models
$\assign'$ and $\assign''$) with
$\assign' \prec_\TD \assign$ (respectively
$(\assign',\assign'') \prec_\TD \assign$).
\end{theorem}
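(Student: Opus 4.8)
The plan is to prove the converse direction of the soundness theorem: starting from a bag model $\assign$ for a non-leaf node $n$, I would reconstruct suitable child bag model(s) by projecting a witnessing partial solution $\eassign$ for $n$ down to the child node(s). The key observation is that every bag model $\assign$ for $n$ comes, by Definition~\ref{def:model}, with a partial solution $\eassign$ for $n$. Since $n$ is not a leaf, it is one of the seven node types (RR, RI, AR, AI, CR, CI, or B). For each single-child type I would take $\eassign$ and restrict it to the subtree $\tree_{n'}$ rooted at the child $n'$, obtaining a candidate $\eassign'$, then define $\assign'$ as the projection of $\eassign'$ onto $\bag(n')$. The main work is to verify that $\eassign'$ satisfies all ten conditions of Definition~\ref{def:p-solution} (so it is genuinely a partial solution for $n'$), that $\assign'$ is its bag model, and that $\assign' \prec_\TD \assign$ holds according to Definition~\ref{def:rules}.

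First I would dispose of the ``easy'' removal and branch cases. For an $r$-RR node I would simply reinstate $r$ into the child's sets: set $\prules' = \prules \cup \{r\}$, $\drules' = \drules \cup \{r\}$, and insert $r$ back into the linear order, inheriting all remaining components of $\eassign$ unchanged. Since $\intree{n}{\cdot} = \intree{n'}{\cdot}$ for all three universes, almost every condition of Definition~\ref{def:p-solution} is inherited verbatim; the only thing to check is that $r \in \erules$, which holds because $r \in \prules = \erules \cap \innode{n'}{\rules}$ follows from $\assign$ being a bag model. The (AR) and (CR) removal cases are symmetric. For a branch node $n$ with children $n'$, $n''$, I would \emph{split} the witnessing $\eassign$ according to which subtree each element lives in: using the connectedness condition~(3) of tree decompositions, the sets $\insubtree{n'}{\cdot}$ and $\insubtree{n''}{\cdot}$ are disjoint and meet only in $\bag(n)$, so I can define $\eassign'$ and $\eassign''$ by restriction and check that their bag projections agree on $\bag(n)$, that the additive relations on $\pweight,\lweight$ of Definition~\ref{def:branch} are recovered (the subtracted $\car$ term corrects the double-counting of the shared bag), and that $\eorder$ decomposes consistently into $\eorder' + \eorder''$.

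The harder cases are the \emph{introduction} nodes (RI, AI, CI), where the element $e$ in $\bag(n)$ is \emph{not} in $\bag(n')$, so $e$ moves from the ``visible/guessed'' part into the subtree below $n$ and its associated conditions must now be discharged. Here the guess encoded in $\assign$ must be shown to match reality: for instance at a $c$-CI node I must argue that whether $c \in \pconst$ was guessed correctly is consistent with $\emodel \models_{\sintree{n}{\atoms}} c$, and that the bookkeeping functions $\pweight^*, \lweight^*, \dheads^*, \dbodies^*, \datoms^*, \realhead^*$ prescribed in Definition~\ref{def:rules} are exactly the values obtained by removing $c$ from the corresponding sets of $\eassign$. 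I would proceed by defining $\eassign'$ to be $\eassign$ with $e$ stripped out of every component and then verifying that the stripped-down tuple still satisfies Definition~\ref{def:p-solution} relative to the smaller universes $\intree{n'}{\cdot}$. The case distinction between ``$e$ is used'' versus ``$e$ is not used'' (for RI/CI) or ``$e$ true'' versus ``$e$ false'' (for AI/CI) is recovered from whether $e$ belongs to the derivation witness sets of $\eassign$.

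I expect the main obstacle to be the introduction cases, specifically reconciling the local update formulas for $\pweight$ and $\lweight$ with the global definitions $\car(c,\emodel,\intree{n}{\atoms})$ and $\car_<(c,\emodel,\intree{n}{\atoms},\eorder)$ when the newly introduced element sits at a particular position in the linear order $\eorder$. The delicate point is that $\car_<$ depends on the order between the atom and the constraint, so at an atom-introduction node I must ensure the guessed position of the new atom relative to each constraint is consistent with the contribution recorded in $\lweight^*$; this requires invoking the connectedness property of the tree decomposition to guarantee that no atom or constraint relevant to $c$ has already been removed below $n'$, so that the incremental update $\car_<(c,\pmodel^*,\innode{n}{\atoms},\porder^*) - \car_<(c,\pmodel',\innode{n'}{\atoms},\porder')$ genuinely accounts for the single element being added. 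Once this bookkeeping is checked, defining $\assign'$ as the bag projection of $\eassign'$ and confirming $\assign' \prec_\TD \assign$ is routine.
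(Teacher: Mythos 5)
Your proposal matches the paper's proof: both take the partial solution $\eassign$ witnessing that $\assign$ is a bag model, restrict/project it down to the child node(s), and verify the conditions of Definitions~\ref{def:p-solution}, \ref{def:model} and \ref{def:rules}; the paper only works out the ($r$-RR) case explicitly and dismisses the remaining six as analogous, so your level of detail is if anything greater. One small remark: you have the roles of introduction and removal nodes partly inverted --- in the bottom-up direction an element introduced at $n$ does not occur anywhere in $\tree_{n'}$ (by the connectedness condition), so the ``discharge of guesses'' you describe actually takes place at the \emph{removal} nodes, where the element enters $\insubtree{n}{\cdot}$ and the preconditions of Definition~\ref{def:rules} must be derived from conditions 1, 8 and 10 of Definition~\ref{def:p-solution}; this does not affect the validity of your construction.
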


\begin{proof}
Again, we have to distinguish between the node type of $n$.
For instance, 
let $n \in \tree$ be an (r-RR) node with child $n'$, let
$\assign$ be a bag model for $n$.
We have to show that there exists a bag model $\assign'$ for $n'$
with $\assign' \prec_\TD \assign$.
Since $\assign$ is a bag model, there exists a partial solution $\eassign$ of $n$,
satisfying all the conditions of Definition~\ref{def:model}.
From $r \in \insubtree{n}{\rules}$ follows, that $r \in \erules$.
Now consider the projection of $\eassign$ onto the bag of $n'$.
Then the result is a bag model $\assign'$ of $n'$ satisfying the conditions
of Definition~\ref{def:model} and having $r \in \prules'$.
But then it is easy to check, that $\assign' \prec_\TD \assign$,
which closes the proof for (RR) nodes.
Analogously the theorem can be checked for the other six node types.
\end{proof}

Theorem~\ref{the:soundness} and Theorem~\ref{the:completeness}
show, that starting from the trivial bag models for empty leafs,
the dynamic programming algorithm creates all bag models for
the root node.
According to Proposition~\ref{prop:rootmodel},
those bag models are all we need to know.
Thus, this dynamic programming algorithm solves
the consistency problem.

\begin{theorem}
\label{the:non-uniform-ptime}
The consistency problem for PCCs $\prog$
can be solved in time
$\bigO(2^{6w} w! k^{4w} \cdot \size{\prog})$
with $w = \tw(\prog)$ and $k = \ww(\prog)$.
\end{theorem}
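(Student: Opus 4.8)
The plan is to bound the running time of the dynamic programming algorithm by combining three factors: the number of nodes in the tree decomposition, the number of distinct bag models stored at each node, and the cost of processing each node. First I would recall that since $\prog$ has treewidth $w$, by Bodlaender's Theorem we can compute in linear time a normalized tree decomposition $\TD$ of width $w$ whose number of nodes is $\bigO(\size{\prog})$; moreover every bag satisfies $\card{\bag(n)} \leq w+1$. This immediately accounts for the factor $\size{\prog}$ in the claimed bound, so the remaining work is to show that the computation local to each node costs $\bigO(2^{6w} w! k^{4w})$.

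The main step is to count the number of bag assignments that can arise at a single node, which controls both the size of each table and the cost of computing the relation $\prec_\TD$. A bag assignment $\assign=(n,\pmodel,\pconst,\prules,\porder,\pweight,\lweight,\derivation)$ is determined by choices over the at most $w+1$ elements in $\bag(n)$. I would count the contributions componentwise: the sets $\pmodel,\pconst,\prules$ together with the derivation sets $\drules,\datoms,\dheads,\dbodies$ are each subsets of the bag, and the check function $\realhead$ is a $\{0,1\}$-valued map on $\dheads$; these Boolean-type choices contribute a factor of the form $2^{O(w)}$, which I would carefully tally to obtain the $2^{6w}$ term. The linear order $\porder$ over a set of at most $w+1$ elements contributes the factor $w!$. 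Finally, the cardinality functions $\pweight$ and $\lweight$ map constraints in the bag to natural numbers, but crucially for a PCC these values are bounded by the constraint-width $k = \ww(\prog)$ (capping values that exceed the relevant bound), so each of the at most $w+1$ constraints has $O(k)$ possible values for each of the two functions, yielding the $k^{4w}$ term. Multiplying these gives the per-node bound $\bigO(2^{6w} w! k^{4w})$.

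Having bounded the number of bag models per node, I would argue that processing a node — introduction, removal, or branch — requires only local computation. For the unary node types the algorithm iterates over the table at the child node and applies the case analysis of Definition~\ref{def:rules}, each application of which inspects only the bag and recomputes $\pweight$, $\lweight$, and the derivation witness in time polynomial in $w$ and $k$; for branch nodes it iterates over pairs of bag models from the two children subject to the matching conditions of Definition~\ref{def:branch}, and here the consistency and combination of linear orders is again bounded by the $w!$ and $2^{O(w)}$ factors already accounted for. Soundness (Theorem~\ref{the:soundness}) and completeness (Theorem~\ref{the:completeness}) guarantee that this traversal computes exactly the set of bag models at each node, and Proposition~\ref{prop:rootmodel} reduces the consistency question to checking whether the trivial bag model survives at the empty-bagged root.

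The step I expect to be the main obstacle is the precise bookkeeping that produces the exponents $6w$ and $4w$ rather than looser constants. One must verify that the branch-node combination does not incur an extra multiplicative blow-up beyond the square of the single-node table size — that is, that iterating over compatible pairs, after fixing the shared components $\pmodel=\pmodel'=\pmodel''$, $\pconst=\pconst'=\pconst''$, $\porder=\porder'=\porder''$, and $\dheads=\dheads'=\dheads''$ forced by Definition~\ref{def:branch}, stays within the stated per-node bound and is amortized correctly over the linear number of nodes. I would also take care that the cardinality values genuinely remain in a range of size $O(k)$: since Proposition~\ref{prop:as} only ever compares $\car$ and $\car_<$ against the bounds $\low(c) \leq \upper(c) \leq k$, any value exceeding the upper bound may be truncated without affecting correctness, which is what keeps $\pweight$ and $\lweight$ from ranging over all of $\N$.
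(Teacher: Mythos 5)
Your overall strategy — number of nodes $\times$ table size per node $\times$ per-node processing cost, with the branch node as the critical case — is exactly the paper's, but your bookkeeping misallocates the budget in a way that prevents the argument from actually reaching the stated bound. You assign the full $2^{6w}\,w!\,k^{4w}$ to the \emph{number of bag models per node}. The paper's count for the table size is only $2^{4w}\,w!\,k^{2w}$: the sets $\pmodel,\pconst,\prules$ contribute $2^{w}$ jointly, the pairs $\drules,\dheads$ and $\datoms,\dbodies$ contribute $2^{w}$ each, $\realhead$ contributes $2^{w}$, the order contributes $w!$, and the two cardinality functions contribute $k^{2w}$. (Your own derivation of the $k^{4w}$ term actually yields $k^{2(w+1)}$, i.e.\ a $k^{2w}$-type bound — two functions on at most $w+1$ constraints with $O(k)$ values each — so the $k^{4w}$ you claim for the table does not follow from the counting you describe.) The remaining factor $2^{2w}k^{2w}$ in the theorem is not part of the table size at all: it is the number of bag models of the \emph{second} child that are compatible with a fixed bag model of the first child at a branch node, namely the number of choices of $\pweight,\lweight,\datoms,\dbodies,\realhead$ once $\pmodel,\pconst,\prules,\porder,\drules,\dheads$ are forced to agree by Definition~\ref{def:branch}. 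The per-node cost is then (table size) $\times$ (compatible partners) $= 2^{4w}\,w!\,k^{2w}\cdot 2^{2w}k^{2w} = 2^{6w}\,w!\,k^{4w}$.

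With your allocation the argument fails at precisely the point you flag as the main obstacle: if the table size really were $2^{6w}\,w!\,k^{4w}$, then the branch-node step would cost either the square of that (naive pairing) or that quantity times the number of compatible partners, and in both cases you exceed $\bigO(2^{6w}\,w!\,k^{4w})$ per node. So the fix is not to ``verify'' that the pairing stays within budget, but to tighten the table-size count to $2^{4w}\,w!\,k^{2w}$ so that there is room left for the $2^{2w}k^{2w}$ pairing factor. Your side remark that the values of $\pweight$ and $\lweight$ may be truncated at the relevant bound (so that they range over $O(k)$ values rather than all of $\N$) is a legitimate and indeed necessary observation, which the paper leaves implicit.
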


\begin{proof}
We first show that the number of different bag models at each node $n\in \tree$ is bounded.
The number of possible sets $\pmodel,\pconst,\prules$ is bounded by $2^w$, there are at most
$w!$ different orderings $\porder$, the number of cardinality functions $\pweight,\lweight$ is bounded by
$k^{2w}$, the number of possible sets $\drules,\dheads$ as well as $\datoms,\dbodies$ is bounded by $2^w$ each, and finally the number of check functions $\realhead$ is bounded by $2^w$.
This leads to at most $2^{4w} w! k^{2w}$ many different bag models at node $n$.
At each node the effort to compute a single bag model is constant
with the exception of branch nodes, where one has to compare possible pairs
of bag models of each child node.
Thereby only pairs are combined which have identical $\pmodel,\pconst,\prules,\porder,\drules,\dheads$.
This means for each bag model of the first child node there are at most $2^{2w} k^{2w}$
(the number of possible functions/sets $\pweight,\lweight,\datoms,\dbodies,\realhead$) bag models at the second child to consider.
The time per node is therefore bounded by $2^{6w} w! k^{4w}$ and since the number of nodes in our tree decomposition is bounded by $\bigO(\size{\prog})$, the total time of
$\bigO(2^{6w} w! k^{4w} \cdot \size{\prog})$ follows.
\end{proof}

\section{Extensions}
\label{sect:extensions}

In this section, we discuss some extensions of our dynamic
programming approach and of Theorem~\ref{the:non-uniform-ptime}.

\paragraph{PWCs with unary weights.}
Our dynamic programming algorithm for the consistency problem of PCCs can be easily extended
to PWCs with \emph{unary representation} both, of the weights and of the 
constraint bounds (\emph{PWCs with unary weights}, for short).

\begin{theorem}
\label{the:unaryPWCs-non-uniform-ptime}
Given an arbitrary PWC $\prog$.
The consistency problem for PWCs
with unary weights
can be solved in time
$\bigO(2^{6w} w! k^{4w} \cdot \size{\prog})$
with $w = \max(3,\tw(\prog))$ and $k = \ww(\prog)$.
\end{theorem}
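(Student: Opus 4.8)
The plan is to extend the dynamic programming algorithm developed for PCCs (Theorem~\ref{the:non-uniform-ptime}) to handle PWCs where weights and bounds are given in unary. First I would observe that in the unary setting, the treewidth $w = \tw(\prog)$ and constraint-width $k = \ww(\prog)$ together make the same data structure tractable: the key quantities that were previously cardinalities (counts of literals) now become weighted sums, but since weights are bounded by $k+1$ (weights exceeding $k+1$ can be capped, as in the linear-time construction of Section~\ref{sect:linear}), each cardinality function $\pweight$ and $\lweight$ still takes values in a range of size polynomial in $k$. Specifically, within a single bag containing at most $w+1$ atoms, the weighted sum $\weight(c,I)$ for a constraint $c$ ranges over at most $k \cdot (w+1)$ possible values, so the counting argument underlying Theorem~\ref{the:non-uniform-ptime} goes through with $k^{2w}$ replaced by a comparable bound.

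The central technical adjustment is to replace the cardinality functions $\car$ and $\car_<$ throughout Definitions~\ref{def:p-solution}--\ref{def:branch} with their weighted analogues $\weight$ and a weighted version $\weight_<$. Concretely, $\car(c,I,\univ)$ becomes a weighted sum where each positive literal $b \in \clause(c)$ with $b \in I$ contributes its weight $j$ rather than $1$, and similarly for negative literals; the same substitution applies to $\car_<$. I would then verify that Proposition~\ref{prop:as} still characterizes answer sets correctly in the weighted case: the result of \cite{Liu09} applies to weight constraints, so condition (R3) need only be reformulated with weighted sums on the right-hand side, matching the reduct definition of Section~\ref{sect:background}. With this reformulation, the partial solutions, bag assignments, and bag models are defined exactly as before, and the soundness and completeness theorems (Theorems~\ref{the:soundness} and~\ref{the:completeness}) carry over essentially verbatim, since their proofs only manipulate the sets and the consistency of orderings, not the specific numeric semantics of $\pweight$ and $\lweight$.

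For the running time, the only change is in counting the number of distinct cardinality (now weight) functions. Each function assigns to each of the at most $w+1$ constraints in a bag a value bounded by $k$ times the number of atoms in the subtree relevant to that constraint; but capping weights at $k+1$ and noting that only the bag contents affect local transitions, the number of relevant weight functions per bag is bounded by $k^{O(w)}$, preserving the $k^{4w}$ factor in the time bound. The requirement $w = \max(3,\tw(\prog))$ appears because the weight-capping transformation and the encoding of weighted literals may introduce a small number of auxiliary structures that raise the effective treewidth to at least $3$; I would make this precise by showing the capping can be done while keeping the incidence-graph treewidth within this bound.

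The main obstacle I anticipate is ensuring that the weighted versions of $\car$ and $\car_<$ compose correctly across branch nodes and introduction/removal nodes. In the cardinality case the additive update rules (e.g., the $\pweight^*(c) = \pweight'(c) + \weight(\dots) - \weight(\dots)$ adjustments in Definition~\ref{def:rules}) rely on disjointness of contributions across subtrees, guaranteed by the connectedness condition~(3) of tree decompositions. I would need to confirm that these inclusion-exclusion-style corrections remain valid for weighted sums, in particular that the subtraction term $\weight(c,\innode{n}{\emodel},\innode{n}{\atoms})$ correctly removes the double-counted contribution of atoms shared between the two children's bags. This is the step where a careless substitution could break correctness, so I would check it explicitly for the branch-node rule of Definition~\ref{def:branch} before declaring the full proof complete.
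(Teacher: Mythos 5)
Your proposal is a legitimate route to the theorem, but it is not the one the paper takes. The paper's proof is a short gadget reduction: each weight-$j$ literal $\ell$ in a constraint is replaced by $j$ unit-weight literals, using $j-1$ fresh atoms $\ell_2,\dots,\ell_j$ that are forced to agree with $\ell$ by new constraints $c_\alpha=(\{(\ell,1),(\neg\ell_\alpha,1)\},1,1)$ and facts $r_\alpha=(c_\alpha,\emptyset)$; since the weights are unary, the resulting PCC is only linearly larger, the constraint-width is unchanged, and the treewidth becomes $\max(3,\tw(\prog))$, so Theorem~\ref{the:non-uniform-ptime} applies as a black box. You instead propose to re-engineer the dynamic programming algorithm itself, replacing $\car$ and $\car_<$ by weighted sums. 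That buys you a slightly tighter statement (no blowup of the instance, and in fact no need for the $\max(3,\cdot)$ in the treewidth --- your attempted explanation of that term as arising from ``weight-capping'' is off: in the paper it is purely an artifact of the gadget constraints $c_\alpha$, and under a direct DP generalization the incidence graph is untouched), at the cost of substantially more verification: you must establish a weighted analogue of Proposition~\ref{prop:as} (you assert the result of Liu et al.\ extends, which is plausible but unproven here) and re-check Definitions~\ref{def:p-solution}--\ref{def:branch} and Theorems~\ref{the:soundness}--\ref{the:completeness} with weighted sums. Your identification of the branch-node inclusion--exclusion as the delicate step, and of connectedness of the decomposition as the reason the correction term remains valid, is sound; the one point you should still make explicit is that capping weight-function values at $k+1$ must be done consistently with these additive updates (the same issue is already implicit in the PCC bound $k^{2w}$). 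As written, your argument is a credible plan rather than a complete proof, whereas the paper's reduction closes the gap in a few lines.
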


\begin{proof}
It suffices to show that
every PWC $\prog$ with unary weights
can be efficiently transformed into a PCC $\prog'$ such that
$\prog$ is only linearly bigger than
$\prog$, the constraint-width remains the same, and the treewidth
is  $\max(3,\tw(\prog))$.
The transformation from $\prog$ to
$\prog'$ processes each literal $\ell$ with weight $j > 1$ in each constraint $c$ of $\prog$ as follows: reduce the weight of $\ell$ to 1 and
add $j-1$ fresh atoms
$\ell_2, \ldots, \ell_j$
(each of weight 1) to $c$. Moreover, we add,
for $\alpha \in \{2, \ldots, j\}$,
new constraints
$c_\alpha :=  (\{ (\ell,1), (\neg \ell_\alpha,1) \}, 1, 1)$
and new rules
$r_\alpha :=  ( c_\alpha, \emptyset)$
to ensure that the fresh variables $\ell_2, \ldots, \ell_j$ have the same truth
value as $\ell$ in every model of $\prog$.

\smallskip
\noindent
It is easy to check that $\prog'$ is only linearly bigger than $\prog$
(since $j$ is given in unary representation) and that the
constraint-width and treewidth are not increased (resp.\ changed from
treewidth $\leq 2$ to  treewidth~3).
\end{proof}

\paragraph{Reasoning with PCCs and PWCs with unary weights.}
In non-monotonic reasoning, two kinds of reasoning are usually
considered, namely skeptical and credulous reasoning.
Recall
that an atom $a$ is skeptically 
implied by a program $\Pi$ if $a$ is true
(i.e., contained)
in every 
stable model of $\Pi$. 
Likewise, an atom $a$ is credulously implied by 
$\Pi$ if $a$ is true in some stable model of $\Pi$.
Our algorithm for the consistency problem can be easily extended to an algorithm for skeptical or
credulous reasoning with PCCs and PWCs with unary weights. The above upper bounds on the complexity thus carry over from the consistency problem to the reasoning problems. We only work out the PCC-case below:

\begin{theorem}
\label{the:reasoning-with-PCCs}
Both the skeptical and the credulous reasoning problem for PCCs $\prog$
can be solved in time
$\bigO(2^{6w} w! k^{4w} \cdot \size{\prog})$
with $w = \tw(\prog)$ and $k = \ww(\prog)$.
\end{theorem}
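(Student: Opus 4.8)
The plan is to reduce both reasoning problems to the consistency problem, so that the running time of Theorem~\ref{the:non-uniform-ptime} carries over directly. For \emph{credulous} reasoning, observe that an atom $a$ is contained in some stable model of $\prog$ if and only if the program $\prog_a$, obtained from $\prog$ by adding a constraint $c_a := (\{(a,1)\},1,1)$ together with a rule $(c_a,\emptyset)$ (using the shorthand of the footnote in Section~\ref{sect:background}), is consistent. The new rule has an empty body and a head constraint that forces $a$ into every model; since $a$ already occurs in $\prog$ this addition does not change which interpretations are stable beyond ruling out those with $a \notin M$. One then just runs the consistency algorithm on $\prog_a$.

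For \emph{skeptical} reasoning, the natural approach is complementation: $a$ is true in every stable model of $\prog$ if and only if there is \emph{no} stable model with $a \notin M$, i.e.\ if and only if the program $\prog_{\neg a}$, obtained by adding the constraint $c_{\neg a} := (\{(\neg a,1)\},1,1)$ and rule $(c_{\neg a},\emptyset)$, is \emph{in}consistent. So skeptical membership of $a$ reduces to a single \emph{non}-consistency check on $\prog_{\neg a}$. In either case the added constraint and rule increase neither the constraint-width (the bounds are $1$) nor, by more than a constant, the treewidth, since the single new constraint vertex is adjacent only to $a$ and to one new rule vertex; inserting these two vertices into any bag containing $a$ raises the width by at most a constant independent of the instance. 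Hence $w$ and $k$ are preserved up to the stated bound.

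The cleaner alternative, which I would actually prefer to carry out in detail, is to extend the dynamic programming algorithm itself rather than modify the instance, since this keeps the treewidth exactly equal to $\tw(\prog)$ as the theorem statement demands ($w=\tw(\prog)$, not $\max(3,\tw(\prog))$). Here one augments each bag model with a single extra bit recording whether the distinguished atom $a$ has been set to true in the (partial) model being tracked; this bit is initialized at the atom-introduction node for $a$ and propagated unchanged through all other node types, and at branch nodes the two children must agree on it (which is automatic since $\pmodel=\pmodel'=\pmodel''$ at branch nodes by Definition~\ref{def:branch}). At the root one collects the set of values the bit takes over all surviving bag models: credulous acceptance holds iff the value $1$ occurs, and skeptical acceptance holds iff the value $0$ never occurs. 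This adds only a constant factor of $2$ to the number of bag models at each node, so the time bound $\bigO(2^{6w} w! k^{4w} \cdot \size{\prog})$ is unchanged.

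The main obstacle is the soundness and completeness of tracking the extra bit, i.e.\ ensuring the correspondence of Propositions~\ref{prop:consistency} and~\ref{prop:rootmodel} still holds when bag models are refined by the truth value of $a$. Most of this is routine because $a$'s membership in $\emodel$ is already determined at its atom-introduction node and, by property~(3) of tree decompositions, $a$ appears in a contiguous subtree; the only point needing care is that the projection relation of Definition~\ref{def:model} now also fixes the bit, so one must re-verify that the soundness and completeness arguments of Theorem~\ref{the:soundness} and Theorem~\ref{the:completeness} go through unchanged for each of the eight node types, with the bit simply carried along. Once this bookkeeping is checked, the reasoning results follow, and the extension to PWCs with unary weights is immediate by composing with the transformation in the proof of Theorem~\ref{the:unaryPWCs-non-uniform-ptime}.
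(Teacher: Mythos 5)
Your preferred route---augmenting the dynamic program rather than modifying the instance---is essentially the paper's own proof. The paper attaches to every bag assignment two flags $\crr(\assign)$ and $\skr(\assign)$ recording whether \emph{some} (respectively \emph{every}) partial solution extending it makes the distinguished atom $a$ true, and reads off credulous/skeptical acceptance from the flags of the root's bag models; your single bit that splits each bag model according to the truth value of $a$ is the same bookkeeping with the same constant-factor overhead, and the connectivity argument you sketch (the nodes whose bags contain $a$ form a connected subtree, and at branch nodes inside it the children agree because $\pmodel=\pmodel'=\pmodel''$) is exactly what makes it go through.

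However, the reduction you offer in your first paragraph for \emph{credulous} reasoning is incorrect, and you should not leave it standing as a viable alternative. Adding the fact $(c_a,\emptyset)$ with $c_a=(\{(a,1)\},1,1)$ does more than ``rule out interpretations with $a\notin M$'': in the reduct $\prog_a^I$ this rule contributes the rule $(a,\emptyset)$ whenever $a\in I$, i.e.\ it supplies a fresh derivation of $a$ and can turn a non-stable model into a stable one. Concretely, take $\atoms=\{a,b\}$, $c_1=(\{(b,1)\},1,1)$, $c_2=(\{(a,1)\},1,1)$ and the single rule $(c_1,\{c_2\})$, i.e.\ ``$b\leftarrow a$'': the only stable model of $\prog$ is $\emptyset$, so $a$ is not credulously implied, yet $\prog_a$ has the stable model $\{a,b\}$ and is therefore consistent. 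The standard fix is to add an integrity constraint, i.e.\ a rule whose head is an unsatisfiable constraint and whose body is $\{(\{(\neg a,1)\},1,1)\}$; this eliminates models with $a$ false while contributing nothing to the reduct. Your \emph{skeptical} reduction happens to be sound, because $c_{\neg a}$ contains no positive literal and hence generates no rule in the reduct, but since your dynamic-programming argument already covers both problems, it is the one to keep.
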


\begin{proof}
Suppose that we are given a PCC $\prog$
and an atom $a$.
The dynamic programming algorithm for the
consistency problem has to be extended
in such a way that we additionally maintain
two flags $\crr(\assign)$ and $\skr(\assign)$
for every bag assignment $\assign$. These flags
may take one of the values $\{\bot,\top\}$ with the intended meaning that
$\crr(\assign) = \top$
(resp.\ $\skr(\assign) = \top$) if and only if there exists
a partial solution $\eassign=(n,\emodel, \dots)$,
(resp.\ if and only if for all
partial solutions $\eassign=(n,\emodel, \dots)$)
the atom $a$ is true in $\emodel$. Otherwise this flag is set to $\bot$.
Then $a$ is credulously (resp.\ skeptically) implied by $\prog$ if and only if there exists a bag
model   (resp.\ if and only if for all bag models)  $\assign$ of the root node $\rootnode$ of $\tree$, we have \ $\crr(\assign) = \top$
(resp.\ $\skr(\assign) = \top$).
Clearly, maintaining the two flags fits within the desired complexity bound.
\end{proof}

\paragraph{Bounded treewidth and bounded constraint-width.}
Recall that we have proved the fixed-parameter linearity of the
consistency problem of PWCs when treewidth and constraint-width are
taken as parameter (see Theorem~\ref{the:lintime}). This fixed-parameter
linearity result (as well as the analogous result for the skeptical and
credulous reasoning problem which can be easily seen to be expressible
in MSO logic) could also be obtained as a corollary of
Theorem~\ref{the:unaryPWCs-non-uniform-ptime}.  Indeed, consider a PWC
$\Pi$ whose treewidth $w$ and constraint-width $k$ are bounded by some
fixed constant.  By previous considerations, we may thus assume that all
weights occurring in $\Pi$ are bounded by a constant. Therefore, we can
transform all weights and bounds into unary representation such that the
size of the resulting PWC with unary weights differs from $\size{\prog}$
only by a constant factor (namely $2^k$). The upper bound on the
complexity in Theorem~\ref{the:unaryPWCs-non-uniform-ptime} immediately
yields the desired fixed-parameter linearity result since $f(w)\cdot
\bigO(k^{2w})$ is bounded by a constant that is independent of the size
of $\prog$.

\section{W[1]-Hardness}
\label{sect:w1}

In this section we will show that it is unlikely that one can improve the
non-uniform polynomial-time result of Theorem~\ref{the:non-uniform-ptime} to a
fixed-parameter tractability result (without bounding the con\-straint-width as in
Theorem~\ref{the:lintime}).  We will develop our hardness result within the
framework of \emph{parameterized complexity}. Therefore we first outline some
of the main concepts of the subject, for an in-depth treatment we refer to
other sources \cite{DowneyFellows99,FlumGrohe06,Niedermeier06}.

An instance of a parameterized problem is a pair $(x,k)$, where $x$ is the
main part and~$k$ (usually a non-negative integer) is the parameter. A
parameterized problem is \emph{fixed-parameter tractable} if an instance
$(x,k)$ of size $n$ can be solved in time $O(f(k)n^c)$ where $f$ is a
computable function and $c$ is a constant independent of~$k$.  If $c=1$ then
we speak of linear-time fixed-parameter tractability. $\FPT$ denotes the class
of all fixed-parameter tractable decision problems.  Parameterized complexity
theory offers a completeness theory similar to the theory of NP-completeness.
An \emph{fpt-reduction} from a parameterized decision problem~$P$ to a
parameterized decision problem~$Q$ is a transformation that maps an
instance~$(x,k)$ of~$P$ of size $n$ to an instance~%
$(x',k')$ 
of~$Q$ 
with $k'\leq g(k)$
in
time~$O(f(k) n^{c})$ ($f,g$ are arbitrary computable functions, $c$ is a
constant) such that $(x,k)$ is a yes-instance of~$P$ if and only if
$(x',k')$ is a yes-instance of~$Q$.  A~parameterized complexity class $\cC$ is the
class of parameterized decision problems fpt-reducible to a certain
parameterized decision problem $Q$.
A parameterized problem $P$ is $\cC$\hy hard, if every problem in $\cC$ is fpt-reducible to $P$.
Problem $P$ is called $\cC$\hy complete, if it is additionally contained in $\cC$.
Of particular interest is the class
$\W{1}$ which is considered as the parameterized analog to {NP}.  For example,
the \textsc{Clique} problem (given a graph $G$ and an integer $k$, decide 
whether
$G$ contains a complete subgraph on $k$ vertices), parameterized by $k$, is a
well-known $\W{1}$\hy complete problem.  It is believed that $\FPT\neq \W{1}$,
and there is strong theoretical evidence that supports this belief, for
example, $\FPT=\W{1}$ would imply that the Exponential Time Hypothesis fails,
see~\cite{FlumGrohe06}.

In the proof of Theorem~\ref{the:whard} below we will devise an fpt-reduction
from the \textsc{Minimum Maximum Outdegree} problem (or MMO, for
short). To state this problem we need to introduce some concepts.  A~(positive
integral) \emph{edge weighting} of a graph $H=(V,E)$ is a mapping $w$ that
assigns to each edge of $H$ a positive integer.  An \emph{orientation} of $H$
is a mapping $\Lambda:E\rightarrow V\times V$ with $\Lambda(\{u,v\})\in
\{(u,v),(v,u)\}$. The \emph{weighted outdegree} of a vertex $v\in V$ with
respect to an edge weighting $w$ and an orientation $\Lambda$ is defined as
\[
d^+_{H,w,\Lambda}(v)=\sum_{\{v,u\}\in E \text{~with } \Lambda(\{v,u\})=(v,u)}
w(\{v,u\}).
\]
An instance of MMO consists of a graph $H$, an edge weighting $w$ of $H$, and
a positive integer~$r$; the question is whether there exists an orientation
$\Lambda$ of $H$ such that $d^+_{H,w,\Lambda}(v)\leq r$ for each $v\in V$.
The MMO problem with edge weights (and therefore also $r$) given in unary is $\W{1}$-hard when parameterized by the treewidth of~$H$~\cite{Szeider2010}.

\begin{theorem}\label{the:whard}
  The consistency problem for PCCs is $\W{1}$-hard when parameterized by
  treewidth.
\end{theorem}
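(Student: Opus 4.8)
The plan is to give an fpt-reduction from the \textsc{Minimum Maximum Outdegree} problem with unary edge weights, which is $\W{1}$-hard parameterized by $\tw(H)$, to the PCC consistency problem parameterized by $\tw(\prog)$. Given an MMO instance $(H,w,r)$ with $H=(V,E)$, I would first fix for every edge $e=\{u,v\}\in E$ an arbitrary ordering of its two endpoints (a \emph{first} and a \emph{second} endpoint) and then introduce, for each edge $e$ and each $i\in\{1,\dots,w(e)\}$, an atom $t_{e,i}$ whose intended reading is ``the $i$-th weight-unit of $e$ points away from the first endpoint of $e$''. Since every literal in a PCC has weight $1$, the unary weight $w(e)$ must be simulated by the multiplicity of these $w(e)$ copies; this is exactly the place where the unary encoding is used, as it keeps $\size{\prog}$ polynomial in the input.

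For each vertex $x\in V$ I would create one \emph{outdegree constraint} $c_x=(S_x,0,r)$, where $S_x$ contains the literal $(t_{e,i},1)$ for every edge $e$ having $x$ as its first endpoint and the literal $(\neg t_{e,i},1)$ for every edge having $x$ as its second endpoint, ranging over all units $i$. Then $\car(c_x,I,\atoms)$ equals the weighted outdegree $d^+_{H,w,\Lambda}(x)$ of $x$ under the orientation $\Lambda$ encoded by $I$, so that $I\models c_x$ iff this outdegree is $\le r$. To force a genuine orientation, i.e.\ to prevent an edge from splitting its weight between both endpoints, I would chain equivalence constraints $c^{=}_{e,i}=(\{(t_{e,i},1),(\neg t_{e,i+1},1)\},1,1)$ for $1\le i<w(e)$; such a constraint is satisfied precisely when $t_{e,i}$ and $t_{e,i+1}$ receive the same truth value. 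Each constraint that has to hold in every model is turned into an enforcing rule $(c,\emptyset)$, which forces $I\models c$. Note that the upper bound $r$ makes the constraint-width of $\prog$ essentially $r$ and hence unbounded; this is unavoidable and consistent with the fact that the theorem bounds only treewidth, since bounding the constraint-width as well would put the problem in $\FPT$ by Theorem~\ref{the:lintime}.

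Two further points need to be handled. First, stability: I would add, for every atom $a$, the trivially satisfied rule $(c_a,\emptyset)$ with $c_a=(\{(a,1)\},0,1)$. Exactly as in the proof of Theorem~\ref{the:nphard}, the reduct $\prog^{I}$ then contains the fact $a\leftarrow$ for every $a\in I$, so $I$ is the unique minimal model of $\prog^{I}$; hence every model of $\prog$ is already a stable model, and $\AS{\prog}\neq\emptyset$ holds iff $\prog$ has a model iff $(H,w,r)$ admits an orientation with all outdegrees $\le r$. Second, and this is the real crux, the treewidth bound: I would start from a width-$\tw(H)$ tree decomposition of $H$, replace each occurrence of a vertex $x$ by its constraint $c_x$, and attach the gadget of an edge $e=\{u,v\}$ as a pendant path at a bag that already contains both $c_u$ and $c_v$ (such a bag exists by property~(2)). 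Along that path the units $t_{e,1},\dots,t_{e,w(e)}$, the equivalence constraints, the enforcing rules, and the per-atom stability gadgets are introduced a constant number at a time, so every gadget bag contains $c_u,c_v$ together with only a bounded number of additional vertices.

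The main obstacle is precisely this last step: each outdegree constraint $c_x$ is a single, potentially high-degree vertex of the incidence graph that must co-occur with the units of all edges incident to $x$. The key observation that keeps the width at $\tw(H)+O(1)$ is that distinct weight-units and distinct edge gadgets are pairwise non-adjacent in the incidence graph, so they never need to share a bag; it therefore suffices to keep each $c_x$ alive throughout the connected part of the decomposition corresponding to the subtree in which $x$ occurred in the decomposition of $H$, and every edge incident to $x$ is processed inside that region. Verifying the three tree-decomposition conditions for this construction, establishing $\tw(\prog)\le g(\tw(H))$ for a computable (indeed linear) $g$, and observing that the whole construction runs in polynomial time, completes the fpt-reduction and hence the $\W{1}$-hardness.
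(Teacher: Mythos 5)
Your reduction is correct and is essentially the paper's own: the same source problem (MMO with unary weights), the same per-vertex constraint $(S_v,0,r)$ whose literal polarities are fixed by an ordering of each edge's endpoints, the same empty-body enforcing rules with the resulting ``all models are stable'' argument, and the same pendant-attachment argument bounding the treewidth by $\tw(H)+O(1)$. The only difference is that you build the PCC directly by splitting each edge weight into unit atoms chained by $(\{(t_{e,i},1),(\neg t_{e,i+1},1)\},1,1)$ equivalence constraints, whereas the paper first constructs a PWC with unary weights and then invokes the generic transformation from the proof of Theorem~\ref{the:unaryPWCs-non-uniform-ptime} (a star of such equivalences centered at the original literal) --- the two are the same construction up to this inlining.
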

\begin{proof}
  Let $(H,w,r)$ be an instance of MMO of treewidth~$t$, $H=(V,E)$.  We may
  assume that no edge is of weight larger than $r$ since otherwise we can
  reject the instance.  Let $\prec$ be an arbitrary linear ordering of $V$. We
  form a PWC $\prog=(\atoms,\const,\rules)$ with unary weights as
  follows: The set $\atoms$ contains an atom $a_{uv}=a_{vu}$ for each edge
  $\{u,v\}\in E$; $\const$ contains a constraint $c_v=(S_v,0,r)$ for each vertex
  $v\in V$ where $S_v=\SB (a_{uv},w(\{v,u\})) \SM \{u,v\}\in E,\ v \prec u\SE \cup \SB
  (\neg a_{uv},w(\{v,u\})) \SM \{u,v\}\in E,\ u \prec v \SE$; $\rules$ contains a rule
  $r_v=(c_v,\emptyset)$ for each vertex $v\in V$.

  \emph{Claim 1. $\tw(\prog)\leq \max(2,t)$}.  Let $(T,\chi)$ be a tree
  decomposition of $H$ of width $t$.  We extend $(T,\chi)$ to a tree
  decomposition of $\prog$ as follows.  For each edge $\{u,v\}\in E$  we pick a
  node $n_{uv}$ of $T$ with $u,v\in \chi(n_{uv})$ and for each vertex $v\in V$
  we pick a node $n_v$ of $T$ with $v\in \chi(n_v)$ (such nodes exist by the
  definition of a tree decomposition).  We attach to $n_{uv}$ a new neighbor
  $n_{uv}'$ (of degree 1) and put $\chi(n_{uv}')=\{u,v,a_{uv}\}$, and we
  attach to $n_v$ a new neighbor $n_v'$ (of degree 1) and put
  $\chi(n_v')=\{v,r_v\}$. It is easy to verify that we obtain this way a tree
  decomposition of  $\prog$ of width $\max(t,2)$, hence the claim
  follows.      Note that in fact we have $\tw(\prog)\geq \tw(H)$ since
  $H$ is a graph minor of the incidence graph of~$\prog$.

  \emph{Claim 2. $H$ has an orientation $\Lambda$ with $\max_{v\in V} d^+_{H,w,\Lambda}(v)\leq r$
  if and only if $\prog$ has a model}.  We associate with an orientation $\Lambda$ the subset
  $A_\Lambda=\SB a_{uv}\in A_\Lambda \SM u\prec v$ and $\Lambda(\{u,v\})=(u,v)\SE$.
  This gives a natural one-to-one correspondence between orientations of~$H$
  and subsets of $\atoms$.  We observe that for each $v\in V$, the sum of
  weights of the literals in constraint $c_v$ satisfied by $A_\Lambda$ is
  exactly the weighted outdegree of $v$ with respect to $\Lambda$. Hence
  $A_\Lambda$ is a model of $\prog$ if and only if $d^+_{H,w,\Lambda}(v)\leq r$ for all
  $v\in V$.

  \emph{Claim 3. All models of $\prog$ are stable}.  This claim follows by
  exactly the same argument as in the proof of Theorem~\ref{the:nphard}.

  $\prog$ can certainly be obtained from $(H,w,r)$ in polynomial
  time. We can even encode the weights of literals in unary since we
  assumed that that the edge weighting $w$ is given in unary.  Hence, by
  Claims~1--3 we have an fpt-reduction from MMO to the consistency
  problem for PWCs with unary weights. Using the construction as
  described in the proof of
  Theorem~\ref{the:unaryPWCs-non-uniform-ptime}, we can transform
  $\prog$ in polynomial time into a decision-equivalent PCC $\Pi'$ by
  increasing the treewidth at most by a small constant.  In total we
  have an fpt-reduction from MMO to the consistency problem for PCCs
  (both problems parameterized by treewidth).  The theorem now follows
  by the $\W{1}$-hardness of MMO for parameter treewidth.
\end{proof}

\section{Discussion}
\label{sect:discussion}

In this work, we have proved several results for PWCs and PCCs of bounded treewidth without addressing the problem of actually computing a tree decomposition of appropriate width. As has been mentioned earlier,
\cite{Bod96}
showed that
deciding if a graph has treewidth $\leq w$ and, if this is the case, computing a tree decomposition of width $w$ is
fixed-parameter linear %
for parameter $w$.
Unfortunately, this linear time algorithm is only of theoretical interest
and the practical usefulness is limited \cite{DBLP:journals/endm/KosterBH01}.
However, considerable progress has been recently made in developing  heuristic-based tree decomposition algorithms
which can handle graphs with moderate size of several hundreds of vertices
\cite{DBLP:journals/endm/KosterBH01,DBLP:journals/dm/BodlaenderK06,DBLP:journals/algorithmica/EijkhofBK07,DBLP:journals/cj/BodlaenderK08,Kask11}.
Recently a meta-theorem for MSO problems on graphs with
cardinality and weight constraints was shown~\cite{Szeider2011}.
This meta-theorem allows one to
handle cardinality constraints with respect to sets that occur as free
variables in the corresponding MSO formula. It provides a polynomial time
algorithm for checking whether a PCC (or a PWC with weights in unary) of
bounded treewidth has a model.  However, in order to check whether a PCC has a
\emph{stable} model, one needs to handle cardinality constraints with respect
to sets that occur as quantified variables in the MSO formula, which is not
possible with the above mentioned meta-theorem.

We have already mentioned  a
dynamic programming algorithm for ASP
\cite{JaklPW09}. This algorithm works
for programs without cardinality or weight constraints, but
possibly with disjunction in the head of the rules.
The data structure manipulated
at each node for this ASP algorithm is conceptually much simpler than the one used here: Potential models of the given program are represented by so-called tree-models. A tree-model consists of a subset of the atoms in a bag (the ones which are true in the models thus represented) and a subset of the rules in a bag (the ones which are
validated by the models thus represented). However, to handle
the minimality condition on stable models, it is not sufficient to
propagate potential models along the bottom-up traversal of the tree decomposition. In addition, it is required, for each potential model $M$, to keep track of all those models of the reduct w.r.t.\ $M$ which would prevent $M$ from  being minimal. Those models
are represented by a set of tree-models
accompanying each tree-model.
Hence, despite the simplicity of the data structure, the time
complexity of the algorithm from \cite{JaklPW09} is
{\em double exponential\/} in the treewidth, since it
has to handle {\em sets of subsets of the bag\/} at each node.
Therefore, rather than extending
that algorithm by mechanisms to handle weight or
cardinality constraints, we have presented here an algorithm based on a completely different data structure -- in particular, keeping track of orderings of the atoms. We have thus managed to obtain
an algorithm whose
time complexity
is {\em single exponential\/} in the treewidth.

\section{Conclusion}
\label{sect:conclusion}

In this paper we have shown how the notion of bounded treewidth can be used to identify tractable fragments of answer-set programming with weight constraints. However,
by proving hardness results,
we have also shown that a straightforward application of
treewidth is not sufficient to achieve the desired tractability.

The upper bounds on the time complexity of our
dynamic programming algorithms
were obtained by very coarse estimates
(see Theorems~\ref{the:non-uniform-ptime},
 \ref{the:unaryPWCs-non-uniform-ptime},
 \ref{the:reasoning-with-PCCs}).
In particular, we assumed
straightforward
methods for storing and manipulating bag assignments.
For an actual implementation of our algorithm, 
we plan to use the 
SHARP framework\footnote{\url{http://http://www.dbai.tuwien.ac.at/proj/sharp/}},
a C++ interface that enables rapid development of algorithms which are based on tree or hypertree decompositions by 
providing (hyper-)tree decomposition routines and algorithm interfaces. It thus allows
the designer to focus on the problem-specific part of the algorithm. 
SHARP itself uses the htdecomp library\footnote{\url{%
http://www.dbai.tuwien.ac.at/proj/hypertree/downloads.html}}
which implements several heuristics for (hyper)tree decompositions, see also \cite{Dermaku08}.
Using sophisticated methods and data structures 
in implementing the functionality of the different node types of our algorithm 
should eventually result in 
a further improvement of the (theoretical) upper bounds on the time complexity provided
in this paper.

For future work, we plan to extend the parameterized complexity analysis and the development of efficient algorithms to further problems where weights or cardinalities play a role. Note that weights are a common feature in
the area of knowledge representation and reasoning, %
for instance, to express costs or probabilities.

\bibliographystyle{abbrv}
\bibliography{publications}

\end{document}